\newtheorem{theorem}{Theorem}
\newtheorem{lemma}[theorem]{Lemma}
\newtheorem{fact}[theorem]{Fact}
\newtheorem{corollary}[theorem]{Corollary}
\theoremstyle{definition}
\newtheorem{definition}{Definition}
\newtheorem{claim}[theorem]{Claim}
\newclass{\alt}{alt}
\newclass{\s}{s}
\newclass{\bs}{bs}
\newclass{\fbs}{fbs}
\newclass{\FC}{FC}
\newclass{\Cert}{C}
\newclass{\EC}{EC}
\newclass{\RC}{RC}
\newclass{\SB}{sb}
\newclass{\fsb}{fsb}
\newclass{\sC}{sC}
\newclass{\cadv}{cadv}
\newclass{\prt}{prt}
\newclass{\corr}{corr}
\newclass{\Q}{Q}
\newclass{\Adv}{Adv}
\newcommand\delim{\,\middle|\,}
\newcommand{\clA}{\mathcal{A}}
\newcommand{\clB}{\mathcal{B}}
\newcommand{\clE}{\mathcal{E}}
\newcommand{\bbE}{\mathbb{E}}
\newtheorem*{rep@theorem}{\textbf{\rep@title}}
\newcommand{\newreptheorem}[2]{%
\newenvironment{rep#1}[1]{%
 \def\rep@title{#2 \ref{##1}}%
 \begin{rep@theorem}}%
 {\end{rep@theorem}}}
\newcommand{\email}[1]{\href{mailto:#1}{#1}}
\newcommand{\calB}{\mathcal{B}}
\title{Quadratically Tight Relations for Randomized Query Complexity}
\title{Quadratically Tight Relations \\ for Randomized Query Complexity}
\author{Dmitry Gavinsky \thanks{Institute of Mathematics, Czech Academy of Sciences, 115 67 \v Zitna 25, Praha 1, Czech Republic.} \and
Rahul Jain \thanks{Centre for Quantum Technologies, National University of Singapore, Block S15, 3 Science Drive 2, Singapore 117543. \email{rahul@comp.nus.edu.sg},  \email{cqthk@nus.edu.sg}, \email{srijita.kundu@u.nus.edu}.} \thanks{MajuLab, UMI 3654, Singapore.} \and 
Hartmut Klauck \footnotemark[2] \footnotemark[4] \and
Srijita Kundu   \footnotemark[2] \and 
Troy Lee \footnotemark[2] \thanks{SPMS, Nanyang Technological University, 21 Nanyang Link, Singapore 637371. \email{troyjlee@gmail.com}, \email{ssanyal@ntu.edu.sg}.} \and
Miklos Santha \footnotemark[2] \thanks{IRIF, Universit\'e Paris Diderot, CNRS, 75205 Paris, France. \email{santha@irif.fr}.
}\and
Swagato Sanyal \footnotemark[2] \footnotemark[4] \and
Jevg\={e}nijs Vihrovs \thanks{Centre for Quantum Computer Science, University of Latvia, Rai\c{n}a 19, Riga, Latvia, LV-1586. \email{jevgenijs.vihrovs@lu.lv}.}}
\date{}
\begin{document}

\maketitle

\begin{abstract}
Let $f:\{0,1\}^n \rightarrow \{0,1\}$ be a Boolean function. The certificate complexity $\C(f)$ is a complexity measure that is quadratically tight for the zero-error randomized query complexity $\R_0(f)$: $\C(f) \leq \R_0(f) \leq \C(f)^2$.
In this paper we study a new complexity measure that we call expectational certificate complexity $\EC(f)$, which is also a quadratically tight bound on $\R_0(f)$: $\EC(f) \leq \R_0(f) = O(\EC(f)^2)$.
We prove that $\EC(f) \leq \C(f) \leq \EC(f)^2$ and show that there is a quadratic separation between the two, thus $\EC(f)$ gives a tighter upper bound for $\R_0(f)$.
The measure is also related to the fractional certificate complexity $\FC(f)$ as follows: $\FC(f) \leq \EC(f) = O(\FC(f)^{3/2})$.
This also connects to an open question by Aaronson whether $\FC(f)$ is a quadratically tight bound for $\R_0(f)$, as $\EC(f)$ is in fact a relaxation of $\FC(f)$.

In the second part of the work, we upper bound the distributed query complexity $\D^\mu_\epsilon(f)$ for product distributions  $\mu$ by the square of the query corruption bound ($\corr_\epsilon(f)$) which improves upon a result of Harsha, Jain and Radhakrishnan [2015]. A similar statement for communication complexity is open. 

\end{abstract}

\section{Introduction}
\label{introduction}
The query model is arguably the simplest model for computation of Boolean functions. Its simplicity is convenient for showing lower bounds for the amount of time required to accomplish a computational task. In this model, an algorithm computing a function $f:\{0,1\}^n \rightarrow \{0,1\}$ on $n$ bits is given query access to the input $x \in \{0,1\}^n$. The algorithm can \emph{query} different bits of $x$, possibly in an adaptive fashion, and finally produces an output. The complexity of the algorithm is the number of queries made; in particular, the algorithm does not incur additional cost for any computation other than the queries.

Unlike the more general models of computation (e.g. Boolean circuits, Turing machines), it is often possible to completely determine the query complexity of explicit functions using existing tools and techniques. The study of query algorithms can thus be a natural first step towards understanding the computational power and limitations of more general and complex models. Query complexity has seen a long line of research by computational complexity theorists. We refer the reader to the survey by Buhrman and de Wolf \cite{Buhrman_deWolf_2002} for a comprehensive introduction to this line of work.

To understand query algorithms, researchers have defined many complexity measures of Boolean functions and investigated their relationship to query complexity, and to one another. For a summary of the current state of knowledge about these measures, see \cite{Aaronson_2016}. 
In this work, we focus on characterizing the bounded-error query complexity $\R(f)$ and the zero-error query complexity $\R_0(f)$.

The following measures are known to lower bound $\R_0(f)$: block sensitivity $\bs(f)$, fractional certificate complexity $\FC(f)$ (also known as fractional block sensitivity $\fbs(f)$, \cite{Tal_2013}), and certificate complexity $\C(f)$. They are related as follows:
\begin{equation*}
\bs(f) \leq \fbs(f) = \FC(f) \leq \C(f).
\end{equation*}
It is known that $\R_0(f) \leq \D(f) \leq \C(f)^2$, and the \textsc{Tribes} function (an \textsc{And} of $\sqrt n$ \textsc{Or}s on $\sqrt n$ bits) demonstrates that this relation is tight \cite{Jain_2010}. It is also known that $\R_0(f)=O(\bs(f)^3)=O(\FC(f)^3)$ \cite{Nisan_1989, Beals_2001}. A quadratic separation between $\R_0(f)$ and $\FC(f)$ is also achieved by \textsc{Tribes}. Aaronson posed a question whether $\R_0(f)=O(\FC^2(f))$ holds \cite{Aaronson_2008} (stated in terms of the randomized certificate complexity $\RC(f)$, which later has been shown to be equivalent to $\FC(f)$ \cite{Gilmer_2016}).
A positive answer to this question would imply that $\R_0(f) = O(\widetilde{\deg}(f)^4) = O(\Q(f)^4)$ \cite{Aaronson_2016}, where $\widetilde{\deg}(\cdot)$ and $\Q(\cdot)$ stand for approximate polynomial degree and quantum query complexity respectively.

One approach to showing $\R_0(f) \leq \FC(f)^2$ is to consider the natural generalization of the proof $\D(f) \leq \C(f)^2$ 
to the randomized case; the analysis of this algorithm, however, has met some unresolved obstacles \cite{Kulkarni_2016}.
We define a new complexity measure \emph{expectational certificate complexity} $\EC(f)$ that is specifically designed to avert these problems and is of a similar form to $\FC(f)$. We show that $\EC$ gives a quadratically tight bound for $\R_0$:
\begin{theorem} \label{ec-quad} For all total Boolean functions $f$, 
\begin{equation*}\EC(f) \leq \R_0(f) \leq O(\EC(f)^2).\end{equation*}
\end{theorem}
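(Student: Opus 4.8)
\medskip

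\noindent\textbf{Proof proposal.} The plan is to prove the two inequalities separately. For $\EC(f)\le\R_0(f)$ I would start from a near-optimal zero-error (Las Vegas) algorithm $\clA$ for $f$; we may assume $\clA$ never repeats a query and makes at most $\R_0(f)$ queries in expectation on every input. When $\clA$ is run on an input $x$ with random string $\rho$ and halts, the queried bits together with their values in $x$ form a certificate $C_x^\rho$ for $f$ at $x$ (because $\clA$ is always correct), and $|C_x^\rho|$ equals the number of queries. Set $w_x(i):=\Pr_\rho[\,i\in C_x^\rho\,]$, so that $w_x(i)\in[0,1]$ and $\sum_i w_x(i)=\bbE_\rho|C_x^\rho|\le\R_0(f)$. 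To see that $\{w_x\}$ is a legal expectational certificate scheme, fix $x,y$ with $f(x)\ne f(y)$ and run $\clA$ on $x$ and on $y$ with independent random strings $\rho,\rho'$: since a $0$-certificate and a $1$-certificate must share a coordinate on which they prescribe conflicting values, $C_x^\rho$ and $C_y^{\rho'}$ always meet in a bit where $x$ and $y$ disagree, whence
\[
\sum_{i:\,x_i\ne y_i} w_x(i)\,w_y(i)\;=\;\bbE_{\rho,\rho'}\,\bigl|C_x^\rho\cap C_y^{\rho'}\cap\{i:x_i\ne y_i\}\bigr|\;\ge\;1 .
\]
This yields $\EC(f)\le\R_0(f)$.

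\noindent For $\R_0(f)=O(\EC(f)^2)$ I would run the natural randomized analogue of the classical algorithm witnessing $\D(f)\le\C(f)^2$. Fix an optimal scheme $\{w_x\}$ and put $W:=\EC(f)=\max_x\sum_i w_x(i)$. On an unknown input $x$, maintain a partial assignment $\alpha$ of the bits queried so far (with their true values in $x$) and run two interleaved branches; branch $\beta\in\{0,1\}$ tries to prove $f(x)=\beta$. At each of its turns, if every input consistent with $\alpha$ has value $\beta$, branch $\beta$ outputs $\beta$ and the algorithm halts; otherwise it picks an input $y$ consistent with $\alpha$ with $f(y)=1-\beta$ and \emph{attacks} $y$, querying $\Theta(W)$ bits drawn i.i.d.\ from the distribution $w_y(\cdot)/\sum_i w_y(i)$ and skipping bits already in $\alpha$. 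The algorithm produces an output only when $\alpha$ forces the value of $f$, and $x$ is always consistent with $\alpha$, so any output is correct; moreover the branch with $\beta=f(x)$ is forced to halt eventually (see below), while the other never can, since $x$ always witnesses the value $f(x)$. Hence the algorithm is zero-error and returns $f(x)$.

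\noindent To bound the query cost I would track the potential $\Phi:=\sum_{i\notin\operatorname{dom}(\alpha)} w_x(i)$ along the branch with $\beta=f(x)$; it starts at $\sum_i w_x(i)\le W$ and is non-increasing. Consider one attack of that branch on a $y$ with $f(y)\ne f(x)$, and note that $\{i:x_i\ne y_i\}$ is disjoint from $\operatorname{dom}(\alpha)$, since $x$ and $y$ are both consistent with $\alpha$. With $\Theta(W)$ i.i.d.\ samples from $w_y/\sum_i w_y$, each bit $i$ is queried with probability $\Omega(w_y(i))$ (concavity of $t\mapsto 1-e^{-ct}$ on $[0,1]$ and $\sum_i w_y(i)\le W$), so the expected decrease of $\Phi$ during the attack is at least a constant multiple of
\[
\sum_{i:\,x_i\ne y_i} w_y(i)\,w_x(i)\;\ge\;1 ,
\]
by the defining inequality of $\EC$. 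Since $\Phi\ge0$ and decreases by $\Omega(1)$ in expectation per attack, a standard stopping-time argument bounds the expected number of attacks of this branch by $O(W)$; each attack costs $O(W)$ queries, and interleaving the two branches at most doubles the total. Therefore the algorithm makes $O(W^2)=O(\EC(f)^2)$ queries in expectation, which gives $\R_0(f)=O(\EC(f)^2)$.

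\noindent The main obstacle is precisely this potential argument, and it is what the measure $\EC$ is designed around. The crucial point is that the $\EC$ constraint controls the \emph{product} $w_x(i)w_y(i)$: an attack on $y$ then uncovers, in expectation, a constant amount of the true input's certificate mass $\sum_i w_x(i)\le W$, so only $O(W)$ attacks are needed. Running the same algorithm with fractional certificates stalls here, because the fractional-certificate inequality $\sum_{i:x_i\ne y_i} w_y(i)\ge1$ guarantees only that an attack hits some bit on which $x$ and $y$ disagree, not one carrying appreciable $w_x$-mass, and there is then no bound on how many distinct inputs $y$ must be eliminated --- this is the unresolved obstacle for $\FC$. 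So the heart of the proof is to pick the defining inequality of $\EC$ so that it is at once weak enough to be extracted from a Las Vegas algorithm (the lower bound) and strong enough to drive the progress argument (the upper bound). Secondary technical points to get right: that a halted Las Vegas run really exposes a bona fide certificate; the exact sample count $\Theta(W)$ and the per-bit query probability it induces; the interleaving and sharing of queries between the two branches; and the stopping-time estimate for the expected number of attacks.
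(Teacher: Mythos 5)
Your proposal is correct, and both halves take a genuinely different route from the paper while sharing the same core insight. For the lower bound, the paper simply chains $\EC(f)\leq \C(f)\leq \R_0(f)$ (the first inequality proved by turning shortest certificates into $0/1$ weights); you instead build a feasible $\EC$ solution directly from an optimal zero-error algorithm via $w_x(i)=\Pr_\rho[i\in C_x^\rho]$, and your verification is sound: a halted Las Vegas run does expose a certificate, and since any $0$-certificate and $1$-certificate must conflict on a coordinate where $x$ and $y$ disagree, the expected size of $C_x^\rho\cap C_y^{\rho'}$ restricted to disagreement coordinates is at least $1$, which is exactly the product constraint. This is a nice self-contained argument that does not pass through $\C(f)$. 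For the upper bound, the paper runs a one-sided-error algorithm for each $b\in\{0,1\}$ making a fixed $\lceil \EC(f)^2/\epsilon\rceil$ single-bit queries (one sampled coordinate of a consistent opposite-value input per iteration), analyzes it with a potential random variable $T=\sum_k T_k$ whose expectation is at least $\EC(f)/\epsilon$ while being at most $(1+p/\epsilon)\EC(f)$, and then converts to zero error by the standard $\ZPP=\RP\cap\coRP$ combination; you instead design the Las Vegas algorithm directly, interleaving the two branches and batching $\Theta(\EC(f))$ samples per attack, and bound the expected number of attacks by an optional-stopping argument on $\Phi=\sum_{i\notin\operatorname{dom}(\alpha)}w_x(i)$. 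The driving estimate is identical in both proofs: one attack on a consistent $y$ with $f(y)\neq f(x)$ uncovers expected $w_x$-mass at least $\Omega\bigl(\sum_{i:x_i\neq y_i}w_x(i)w_y(i)\bigr)\geq\Omega(1)$, using that disagreement coordinates are unqueried. Your version buys a direct expected-cost bound without the error-parameter bookkeeping and the $\ZPP$ conversion (and, combined with your lower-bound construction, keeps the whole theorem within the Las Vegas framework); the paper's version buys simpler probabilistic bookkeeping (a worst-case query cap per one-sided algorithm, no stopping-time machinery) at the price of the final standard conversion. If you write yours up fully, the points to make rigorous are exactly the ones you flag: the fact that $\Phi=0$ forces the correct branch to halt (otherwise a consistent opposite-value $y$ would violate the constraint), and the integrability/finiteness details in the stopping-time bound $\bbE[\tau]=O(\EC(f))$.
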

\noindent In fact, $\FC(f)$ is a relaxation of $\EC(f)$, and we show that $\FC(f) \leq \EC(f) \leq \C(f).$
Moreover, we show that $\EC(f)$ lies closer to $\FC(f)$ than $\C(f)$ does:
$\FC(f) \leq \EC(f) \leq \FC(f)^{3/2}.$
While we don't know whether $\EC(f)$ is a lower bound on $\R(f)$, the last property gives $\EC(f)^{2/3} \leq \R(f)$.

As mentioned earlier, $\C(f)^2$ bounds $\R_0(f)$ from above. But for specific functions, $\EC(f)^2$ can be an asymptotically tighter upper bound than $\C(f)^2$. We demonstrate that by showing that the same example that provides a quadratic separation between $\C(f)$ and $\FC(f)$ \cite{Gilmer_2016} also gives $\C(f)=\Omega(\EC(f)^2)$. This is the widest separation possible between $\EC(f)$ and $\C(f)$, because $\C(f) \leq \R_0(f) =O(\EC(f)^2)$.

In the second part of the paper, we upper bound the distributional query complexity for product distributions in terms of the minimum product query corruption bound and the block sensitivity (see Definition~\ref{corruption} and Section~\ref{preliminaries}).
\begin{theorem}
\label{prt_ub:this1}
Let $\epsilon \in [0,1/2)$ and $\mu$ a product distribution over the inputs.
Then
\begin{equation*}\D_{4\epsilon}^{\mu}(f) = O(\corr_{\min,\epsilon}^{\times}(f) \cdot \bs(f)).\end{equation*}
\end{theorem}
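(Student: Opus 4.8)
The plan is to adapt the classical argument that $\D(f)\le\C(f)^2$ to the approximate, product-distribution setting: the corruption bound supplies approximate certificates for the cheaper of the two labels, and block sensitivity controls the depth of the resulting decision tree. Unpacking Definition~\ref{corruption} and writing $k=\corr_{\min,\epsilon}^{\times}(f)$, and relabelling $f$ if necessary (which changes neither $\bs(f)$, nor $\D_{4\epsilon}^{\mu}(f)$, nor the corruption bound), I extract a family $\mathcal{C}$ of subcubes, each fixing at most $O(k)$ coordinates, such that: (i) every $C\in\mathcal{C}$ is an $\epsilon$-approximate $1$-certificate --- conditioning $\mu$ on $x\in C$ leaves $f(x)=1$ with probability at least $1-\epsilon$, robustly enough that this persists under further conditioning on coordinates outside $C$ (this robustness is what the ``product'' variant $\corr^{\times}$ provides, and it meshes with the fact that conditioning a product distribution on a subcube again yields a product distribution on the free coordinates); and (ii) $\mathcal{C}$ covers all but an $\epsilon$-fraction of the $1$-inputs under $\mu$.

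The algorithm is the natural greedy decision tree over $\mathcal{C}$: maintain a restriction $\rho$, initially empty, recording the answers seen so far, and repeat --- if $f|_{\rho}$ is constant, output that constant; if no $C\in\mathcal{C}$ is consistent with $\rho$, output $0$ (the input has escaped the cover, so it is a $0$-input unless it is one of the few uncovered $1$-inputs); otherwise pick a consistent $C\in\mathcal{C}$ greedily (the precise rule is chosen to force progress, as discussed below), query the at most $O(k)$ coordinates it fixes outside $\mathrm{dom}(\rho)$, update $\rho$, and if now $x\in C$, output $1$. Each round costs $O(k)$ queries, so it remains to bound the number of rounds by $O(\bs(f))$ and the error by $4\epsilon$.

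For the error bound, the only mis-classified inputs are the uncovered $1$-inputs --- of total $\mu$-mass at most $\epsilon$ by (ii) --- and the $0$-inputs that at some round land inside the queried subcube. For the latter, conditioned on reaching round $i$ with restriction $\rho_i$ (a product event on $\mathrm{dom}(\rho_i)$), the input is distributed as $\mu$ restricted to $\rho_i$, still a product distribution, so by (i) the chance that $x\in C_i$ while $f(x)=0$ is at most $\epsilon\cdot\Pr[x\in C_i\mid\rho_i]$; since the events ``halt at round $i$'' are disjoint, these contributions sum to at most $\epsilon$. A constant number of such $\epsilon$-terms (plus a factor lost in boosting the parameters of $\mathcal{C}$ a little) gives total error at most $4\epsilon$. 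For the depth bound, fix a correctly-handled $0$-input $x$, so $x\notin C_i$ in every round; each failed round exposes a coordinate $v_i\notin\mathrm{dom}(\rho_{i-1})$ on which $x$ beats $C_i$, and the $v_i$ are distinct. The claim is that the greedy rule can be arranged so that each new conflict $v_i$ lies in a sensitive block of $x$ not hit before, of which there are at most $\bs(f)$; once all of them have been hit, $f|_{\rho}$ is forced to $0$ and we halt. The intuition is that each $C_i$, being an approximate $1$-certificate consistent with $\rho_{i-1}$, must conflict with the part of $x$ witnessing $f(x)=0$, and block sensitivity is the right way to measure that part.

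The main obstacle is this last step. In the exact-certificate argument a $1$-certificate and a $0$-certificate of the same input must conflict, which gives a clean count that terminates within $\C(f)$ rounds; here the subcubes are only approximate, so a corrupted $0$-input lying inside some $C_i$ is exactly a non-conflict and must be quarantined into the error term first, after which the ``$\C(f)$ conflicting coordinates'' count has to be upgraded to a ``$\bs(f)$ conflicting blocks'' count --- this needs the product structure (to keep the round-by-round conditional analysis honest) together with a greedy rule under which the sequence of conflicts chips away at a minimal set of sensitive blocks rather than at arbitrarily many coordinates. A secondary delicate point is that property (i) must not be eroded by the coordinates read in earlier rounds, which is exactly where the precise, product-robust formulation of $\corr^{\times}_{\min}$ from Definition~\ref{corruption} is needed.
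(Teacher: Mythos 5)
Your high-level plan (repeatedly query short approximate certificates under the current conditioned product distribution, and use $\bs(f)$ to cap the number of rounds) is in the right spirit, but two steps that carry the whole weight of the argument are not actually available. First, the static covering family $\mathcal{C}$ of $\epsilon$-approximate $1$-certificates with properties (i) and (ii) does not follow from Definition~\ref{corruption}: $\corr^{\times}_{\min,\epsilon}(f)=\max_{\mu}\min_{b}\corr^{b,\mu}_{\epsilon}(f)$ only guarantees that \emph{each} product distribution admits one short approximate certificate for \emph{whichever} label is cheaper for that distribution. The cheap label can flip as you condition $\mu$ on the answers already read, so no global relabelling of $f$ lets you work with $1$-certificates only, and there is no covering guarantee of the form ``all but an $\epsilon$-fraction of the $1$-inputs'' at all. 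The paper's algorithm is therefore adaptive and two-sided: in each round it takes the shortest $\epsilon$-error certificate under the current conditioned distribution for whichever value $b^{(i)}$ it happens to certify, keeps counters $t_0,t_1$, and halts either on a consistent certificate or when some $t_b$ reaches $2\bs(f)$, outputting that $b$; your ``output $0$ when no $C\in\mathcal{C}$ is consistent'' rule has no analogue and cannot be justified from the corruption bound.

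Second, the depth/progress step you flag as the main obstacle is genuinely missing, and the difficulty is not only the greedy rule. When a round fails ($x\notin C_i$), the conflict set $S^{(i)}$ is a sensitive block of $x$ only if $x^{S^{(i)}}$ is a \emph{non}-corrupted point of $C_i$; the dangerous inputs are those agreeing with the corrupted part of $C_i$ on the free coordinates while disagreeing on some fixed ones. These are not ``$0$-inputs landing inside the queried subcube,'' so your error accounting never quarantines them, and without doing so you cannot conclude that failed rounds chip away at disjoint sensitive blocks, nor bound the number of rounds by $O(\bs(f))$. The paper handles exactly this via the shift argument: Claim~\ref{noshift} uses the product structure to show that for every shift $s$, $\Pr[x\in E^{(i)}\oplus s\mid x\in A^{(i)}\oplus s]\le\epsilon$, so each round is ``bad'' with probability at most $\epsilon$; Markov's inequality then gives that fewer than $\bs(f)$ of the at most $4\bs(f)$ rounds are bad except with probability $4\epsilon$, and Claim~\ref{sensitive_block} shows that good rounds with label $b$ yield pairwise disjoint sensitive blocks, so that $t_b=2\bs(f)$ forces $f(x)=b$. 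Note also that the paper does not prove (and does not need) that every round makes progress; correctness comes from this counting contradiction, which is why the counters go up to $2\bs(f)$ rather than $\bs(f)$. Without the shift/bad-round device, or some substitute for it, your outline does not close.
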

We contrast Theorem~\ref{prt_ub:this1} with the past work by Harsha, Jain and Radhakrishnan \cite{Harsha+_2015}, who showed that for product distributions, the distributional query complexity is bounded above by the square of the smooth corruption bound corresponding to inverse polynomial error. Theorem~\ref{prt_ub:this1} improves upon their result, firstly by upper bounding the distributional complexity by minimum query corruption bound, which is an asymptotically smaller measure than the smooth corruption bound, and secondly by losing a constant factor in the error as opposed to a polynomial worsening in their work. Theorem~\ref{D&corr}, a consequence of Theorem~\ref{prt_ub:this1},  shows that for product distribution over the inputs, the distributional query complexity is asymptotically bounded above by the square of the query corruption bound.
Thus Theorem~\ref{D&corr} resolves a question that was open after the work of Harsha et. al. The analogous question in communication complexity is still open.

Theorem~\ref{prt_ub:this1} also bounds distributional query complexity in terms of the \emph{partition bound} $\prt(\cdot)$ of Jain and Klauck \cite{Jain_2010}. The following theorem follows from Theorems~\ref{prt_ub:this1} and~\ref{ceps&prt}.
\begin{theorem}\label{prtjk}
If $\epsilon \in \left[0, \frac{1}{8}\right]$ then $\D_{8\epsilon}^{\mu}(f) = O(\prt_\epsilon(f)^2\cdot \log(1/\epsilon)).$
\end{theorem}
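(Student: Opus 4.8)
The plan is to derive Theorem~\ref{prtjk} purely by combining the two results cited immediately before it, so that once Theorem~\ref{ceps&prt} is available the argument is little more than bookkeeping with the error parameters. First I would invoke Theorem~\ref{prt_ub:this1} with the value $2\epsilon$ playing the role of its error parameter: since $\epsilon\in[0,1/8]$ we have $2\epsilon\in[0,1/4)\subset[0,1/2)$, so its hypothesis is satisfied, and its conclusion becomes $\D_{8\epsilon}^{\mu}(f)=O(\corr_{\min,2\epsilon}^{\times}(f)\cdot\bs(f))$ for every product distribution $\mu$. It then remains to bound the right-hand side by $O(\prt_\epsilon(f)^2\cdot\log(1/\epsilon))$, and this is exactly what I would extract from Theorem~\ref{ceps&prt}: either directly as a product bound $\corr_{\min,2\epsilon}^{\times}(f)\cdot\bs(f)=O(\prt_\epsilon(f)^2\log(1/\epsilon))$, or, more likely, as the two separate estimates $\corr_{\min,2\epsilon}^{\times}(f)=O(\prt_\epsilon(f)\cdot\log(1/\epsilon))$ and $\bs(f)=O(\prt_\epsilon(f))$ (up to harmless renaming of the error parameters internal to Theorem~\ref{ceps&prt}). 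Substituting into the displayed bound finishes the proof, and the hypothesis $\epsilon\le 1/8$ is used only to keep $2\epsilon$ inside the admissible range of Theorem~\ref{prt_ub:this1}.

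For completeness I would also recall why the ingredients of Theorem~\ref{ceps&prt} should hold. For the corruption estimate I would use that the (query) corruption bound is a relaxation of the partition bound: a feasible solution of the $\prt_\epsilon$ linear program assigns, for each $1$-input $x$ (and symmetrically each $0$-input), total weight at least $1-\epsilon$ to subcubes of label $f(x)$ passing through $x$, with total LP value $O(\prt_\epsilon(f))$, which is already essentially a corruption certificate of error $\epsilon$. Passing from this to the product, minimum variant $\corr_{\min}^{\times}$ requires realising the certificate over a product distribution rather than an arbitrary one, and it is in this step --- not in the LP duality --- that I expect the $\log(1/\epsilon)$ factor and the constant-factor worsening of the error (from $\epsilon$ to $2\epsilon$) to enter, via a factor-by-factor construction together with an averaging/boosting argument. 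For the block-sensitivity estimate $\bs(f)=O(\prt_\epsilon(f))$ I would use the standard argument that a maximal set of disjoint sensitive blocks at some input forces, in any feasible $\prt_\epsilon$ solution, an amount of weight that is $\Omega(1)$ per block and cannot be shared too freely between blocks of opposite label, so that $\prt_\epsilon(f)=\Omega(\bs(f))$ for small constant $\epsilon$ (here $\epsilon\le 1/8$) with only an $O(1)$ loss.

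The main obstacle is therefore entirely inside Theorem~\ref{ceps&prt} rather than in the combination step: concretely, extracting from a partition-bound solution a corruption certificate that respects the \emph{product} structure demanded by $\corr_{\min}^{\times}$, while keeping the overhead down to a single $\log(1/\epsilon)$ factor and the error blow-up down to a constant (as opposed to the polynomial loss in the earlier Harsha--Jain--Radhakrishnan bound). Granting Theorem~\ref{ceps&prt}, Theorem~\ref{prtjk} follows immediately by the chain above.
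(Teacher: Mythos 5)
Your proposal is correct and follows essentially the same route as the paper, which proves Theorem~\ref{prtjk} by exactly this combination: Theorem~\ref{prt_ub:this1} applied with error parameter $2\epsilon$, the bound $\corr^{\times}_{\min,2\epsilon}(f)\le\prt_\epsilon(f)\cdot\log(1/\epsilon)$ of Lemma~\ref{ceps&prt}, and $\bs(f)=O(\prt_\epsilon(f))$. The only slips are attributional rather than substantive: the block-sensitivity estimate is the separately quoted Jain--Klauck result (Theorem~\ref{fbs&prt}), not part of Lemma~\ref{ceps&prt} (whose proof, incidentally, gets the $\log(1/\epsilon)$ and the $\epsilon\to 2\epsilon$ loss from truncating the partition-bound LP to small subcubes, not from a product-structure construction), and the hypothesis $\epsilon\le 1/8$ is needed for that block-sensitivity step as well (via $4\epsilon\le 1/2$ in Theorem~\ref{fbs&prt}), not merely to keep $2\epsilon$ in the admissible range of Theorem~\ref{prt_ub:this1}.
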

Jain and Klauck showed that $\prt(f)$ is a powerful lower bound on $\R(f)$. In the same work, $\prt(f)$ was used to give a tight $\Omega(n)$ lower bound on $\R(f)$ for the \textsc{Tribes} function on $n$ bits. The authors proved that $\prt(f)$ is asymptotically larger than $\FC(f)$. This implies that $\R(f)=O(\prt(f)^3)$, since $\R(f)=O(\bs(f)^3)$. While a quadratic separation between $\R(f)$ and $\prt(f)$ is known \cite{Ambainis_2016}, it is open whether $\R(f)=O(\prt(f)^2)$. Theorem~\ref{prt} proves a distributional version of this quadratic relation, for the special case in which the input is sampled from a product distribution. We remark here that Jain, Harsha and Radhakrishnan proved in their work that $\D_{1/3}^{\mu}(f) = O(\prt_{1/3}(f)^2\cdot (\log \prt_{1/3}(f))^2)$; Theorem~\ref{prt} achieves polylogarithmic improvement over this bound.
We note here that an analogous statement for an arbitrary distribution together with the Minimax Principle (see Fact~\ref{minimax}) will imply that $\R(f)=O(\prt(f)^2)$.

The paper is organized as follows.
In Section \ref{preliminaries}, we give the definitions for some of the complexity measures.
In Section \ref{ec}, we define the expectational certificate complexity and prove the results concerning this measure, starting with Theorem \ref{ec-quad}.
In Section \ref{prt}, we define the minimum query corruption bound and prove Theorems  \ref{prt_ub:this1} and \ref{prtjk}.
In Section \ref{open}, we list some open problems concerning our measures.

\section{Preliminaries}
\label{preliminaries}

In this section we recall the definitions of some known complexity measures.
For detailed introduction on the query model, see the survey \cite{Buhrman_deWolf_2002}.
For the rest of this paper, $f$ is any total Boolean function on $n$ bits, $f : \{0, 1\}^n \to \{0, 1\}$.

\begin{definition}[Randomized Query Complexity]
Let $\clA$ be a randomized algorithm that as an input takes $x \in \{0, 1\}^n$ and returns a Boolean value $\clA(x, r)$, where $r$ is any random string used by $\clA$.
With one query $\clA$ can ask the value of any input variable $x_i$, for $i \in [n]$.
The complexity $C(\clA, x, r)$ of $\clA$ on $x$ is the number of queries the algorithm performs under randomness $r$, given $x$.
The worst-case complexity of $\clA$ is $C(\clA) = \max_{r, x \in \{0, 1\}^n} C(\clA, x, r)$.

The \emph{zero-error randomized query complexity} $\R_0(f)$ is defined as  $\min_{\clA}\max_x \bbE_r [C(\clA, x, r)]$, where $\clA$ is any randomized algorithm such that for all $x \in \{0, 1\}^n$, we have $\Pr_r[\clA(x,r) = f(x)] = 1$.

The \emph{one-sided error randomized query complexity} $\R^0_{\epsilon}(f)$ is defined as $\min_{\clA}C(\clA)$, where $\clA$ is any randomized algorithm such that for every $x$ such that $f(x) = 0$, we have $\Pr_r[\clA(x,r) = 1] \leq \epsilon$, and for all $x$ such that $f(x) = 1$, we have $\Pr_r[\clA(x,r) = 1] = 1$.
Similarly we define $\R^1_{\epsilon}(f)$.

The \emph{two-sided error randomized query complexity} $\R_{\epsilon}(f)$ is defined as $\min_{\clA}C(\clA)$, where $\clA$ is any randomized algorithm such that for every $x \in \{0, 1\}^n$, we have $\Pr_r[\clA(x,r) \neq f(x)] \leq \epsilon$.
We denote $\R_{1/3}(f)$ simply by $\R(f)$.
\end{definition}

\begin{definition}[Distributional Query Complexity]
Let $\mu$ be a probability distribution over $\{0, 1\}^n$, and $\epsilon \in [0,1/2)$.
The \emph{distributional query complexity} $\D^\mu_\epsilon(f)$ is the minimum number of queries made in the worst case (over inputs) by a deterministic query algorithm $\clA$ for which $\Pr_{x \sim \mu}[\clA(x) = f(x)] \geq 1 - \epsilon$.
\end{definition}

The \emph{Minimax Principle} relates the randomized query complexity and distributional query complexity measures of Boolean functions.
\begin{fact}[Minimax Principle]\label{minimax}
For any Boolean function $f, \R_\epsilon(f)=\max_{\mu}\D_\epsilon^\mu(f).$
\end{fact}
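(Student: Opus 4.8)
The plan is to prove the two inequalities $\R_\epsilon(f) \geq \max_\mu \D_\epsilon^\mu(f)$ and $\R_\epsilon(f) \leq \max_\mu \D_\epsilon^\mu(f)$ separately, which is Yao's classical minimax argument. Throughout I would use that a randomized query algorithm of finite worst-case cost is exactly a probability distribution over the \emph{finite} set of deterministic decision trees of depth at most $n$, and that $\D_\epsilon^\mu(f) \leq n$ for every $\mu$ because the algorithm querying all variables is exact. Set $d := \max_\mu \D_\epsilon^\mu(f) \leq n$.

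For the easy direction, let $\clA$ be a randomized algorithm witnessing $\R_\epsilon(f)$, so $C(\clA) = \R_\epsilon(f)$ and $\Pr_r[\clA(x,r) \neq f(x)] \leq \epsilon$ for every $x$. Fixing any distribution $\mu$ and swapping the order of expectation,
\[
\bbE_r\left[\Pr_{x \sim \mu}[\clA(x,r) \neq f(x)]\right] \;=\; \bbE_{x \sim \mu}\left[\Pr_r[\clA(x,r) \neq f(x)]\right] \;\leq\; \epsilon ,
\]
so some fixing $r^*$ of the randomness satisfies $\Pr_{x \sim \mu}[\clA(x,r^*) \neq f(x)] \leq \epsilon$. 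The deterministic algorithm $x \mapsto \clA(x,r^*)$ makes at most $C(\clA,x,r^*) \leq C(\clA) = \R_\epsilon(f)$ queries on every $x$, hence $\D_\epsilon^\mu(f) \leq \R_\epsilon(f)$; taking the maximum over $\mu$ gives this inequality.

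For the converse, the plan is to set up the finite zero-sum game in which the first player picks a deterministic decision tree $T$ of depth at most $d$, the second player picks an input $x \in \{0,1\}^n$, and the payoff to the second player is the error indicator $\Id[T(x) \neq f(x)]$. A mixed strategy of the first player is a distribution $p$ over depth-$\leq d$ trees, i.e.\ a randomized algorithm of worst-case cost at most $d$; a mixed strategy of the second player is a distribution $\mu$ over inputs. Since a pure best response suffices against a fixed $\mu$, the definition of $d$ gives, for every $\mu$, a single depth-$\leq d$ tree $T_\mu$ with $\bbE_{x \sim \mu}[\Id[T_\mu(x) \neq f(x)]] \leq \epsilon$, so $\max_\mu \min_T \bbE_{x \sim \mu}\bigl[\Id[T(x) \neq f(x)]\bigr] \leq \epsilon$. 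By von Neumann's minimax theorem this equals $\min_p \max_x \Pr_{T \sim p}[T(x) \neq f(x)]$, so some distribution $p$ over depth-$\leq d$ trees has error at most $\epsilon$ on every $x$. This $p$ is a randomized algorithm of worst-case cost at most $d$ and worst-case error at most $\epsilon$, whence $\R_\epsilon(f) \leq d$; combining with the easy direction proves the claim.

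The one step needing care is the invocation of the minimax theorem: it is essential to restrict the first player to trees of depth at most $d$ (not all decision trees) so that the strategy set is finite, the game has a well-defined value, and the cost of the resulting mixed strategy is automatically bounded by $d$, with the error entering only through the payoff. Everything else is an averaging argument plus the fact that a worst-case guarantee dominates an average-case one.
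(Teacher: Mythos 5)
Your proof is correct. The paper states this Minimax Principle as a known fact (Yao's principle) without giving a proof, and your argument is the standard one: the averaging direction plus von Neumann's minimax theorem applied to the finite zero-sum game whose pure strategies are depth-at-most-$d$ decision trees and inputs, with the error indicator as payoff. You correctly handle the one delicate point, namely encoding the cost bound $d=\max_\mu \D_\epsilon^\mu(f)$ into the strategy set (keeping the game finite and forcing the resulting mixed strategy to have worst-case cost at most $d$) while letting the error enter only through the payoff, which is exactly what yields the equality with the same $\epsilon$ on both sides rather than a lossy version.
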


\begin{definition}[Product Distribution]
A probability distribution $\mu$ over $\{0, 1\}^n$ is a \emph{product distribution} if there exist $n$ functions $\mu_1, \ldots, \mu_n : \{0, 1\} \to [0, 1]$ such that $\mu_i(0) + \mu_i(1) = 1$ for all $i$ and for all $x \in \{0, 1\}^n$,
\begin{equation*}\mu(x) = \prod_{i \in [n]} \mu_i(x_i).\end{equation*}
\end{definition}

\begin{definition}[Certificate Complexity]
An \emph{assignment} is a map $A:\{1, \ldots, n\} \rightarrow \{0,1, *\}$.
All inputs consistent with $A$ form a subcube $\{x \in \{0,1\}^n \mid \forall i \in [n]: x_i=A(i) \text{ or } A(i) = *\}$.
The length or size of an assignment, denoted by $|A|$, is defined to be the co-dimension of the subcube it corresponds to.
Let $Q_A:=\{j:A(j) \neq *\}$ be the set of variables fixed by $A$.

For $b \in \{0, 1\}$, a $b$-certificate for $f$ is an assignment $A$ such that $x \in A \Rightarrow f(x) = b$.
The \emph{certificate complexity} $\C(f, x)$ of $f$ on $x$ is the size of the shortest $f(x)$-certificate that is consistent with $x$.
The certificate complexity of $f$ is defined as $\C(f) = \max_{x \in \{0, 1\}^n} \C(f, x)$.
The $b$-certificate complexity of $f$ is defined as $\C^b(f) = \max_{x : f^{-1}(b)} \C(f, x)$.
\end{definition}

\begin{definition}[Sensitivity and Block Sensitivity]
For $x \in \{0, 1\}^n$ and $S \subseteq [n]$, let $x^S$ be $x$ flipped on locations in $S$.
The \emph{sensitivity} $\s(f, x)$ of $f$ on $x$ is the number of different $i \in [n]$ such that $f(x) \neq f(x^{\{i\}})$.
The sensitivity of $f$ is defined as $\s(f) = \max_{x \in \{0, 1\}^n} \s(f, x)$.

The \emph{block sensitivity} $\bs(f, x)$ of $f$ on $x$ is the maximum number $k$ of disjoint subsets $B_1, \ldots, B_k \subseteq [n]$ such that $f(x) \neq f(x^{B_i})$ for each $i \in [k]$.
The block sensitivity of $f$ is defined as $\bs(f) = \max_{x \in \{0, 1\}^n} \bs(f, x)$.
\end{definition}

\begin{definition}[Fractional Certificate Complexity]
The \emph{fractional certificate complexity} $\FC(f, x)$ of $f$ on $x \in \{0, 1\}^n$ is defined as the optimal value of the following linear program:
\begin{equation*}
\text{minimize } \sum_{i \in [n]} v_x(i) \hspace{1cm} \text{ subject to } \quad \forall y \text{ s.t. } f(x) \neq f(y): \sum_{i : x_i \neq y_i} v_x(i) \geq 1.
\end{equation*}
Here $v_x \in \mathbb R^n$ and $v_x(i) \geq 0$ for each $x \in \{0, 1\}^n$ and $i \in [n]$.
The fractional certificate complexity of $f$ is defined as $\FC(f) = \max_{x \in \{0, 1\}^n} \FC(f, x)$.
\end{definition}

\begin{definition}[Fractional Block Sensitivity]
Let $\clB =\{B \mid f(x) \neq f(x^B)\}$ be the set of sensitive blocks of $x$.
The \emph{fractional block sensitivity} $\fbs(f, x)$ of $f$ on $x$ is defined as the optimal value of the following linear program:
\begin{align*}
\text{maximize } \sum_{B \in \clB} u_x(y) \hspace{1cm} \text{ subject to } \quad & \forall i \in [n]: \sum_{B \in \clB \atop i \in B} u_x(B) \leq 1.
\end{align*}
Here $u_x \in \mathbb R^{|\clB|}$ and $u_x(B) \leq 1$ for each $x \in \{0, 1\}^n$ and $B \in \clB$.
The fractional block sensitivity of $f$ is defined as $\fbs(f) = \max_{x \in \{0, 1\}^n} \fbs(f, x)$.
\end{definition}

The linear programs $\FC(f, x)$ and $\fbs(f, x)$ are duals of each other, hence their optimal solutions are equal and $\FC(f) = \fbs(f)$ \cite{Gilmer_2016}.

\section{Expectational Certificate Complexity} \label{ec}
In this section, we give the results for the expectational certificate complexity.
The measure is motivated by the well-known $\D(f) \leq \C^0(f)\C^1(f)$ deterministic query algorithm which was
independently discovered several times \cite{Blum_1987,HH87,T90}.
In each iteration, the algorithm queries the set of variables fixed by some consistent 1-certificate.  Either the query 
answers agree with the fixed values of the 1-certificate, in which case the input must evaluate to 1, or the algorithm makes
progress as the 0-certificate complexity of all 0-inputs still consistent with the query answers is decreased by at least 1.
The latter property is due to the crucial fact that the set of fixed values of any 0-certificate and 1-certificate must intersect.

In hopes of proving $\R(f) \leq \FC^0(f)\FC^1(f)$, a straightforward generalization to a randomized algorithm would be to 
pick a consistent 1-input $x$ and query each variable independently with probability $v_x(i)$, where $v_x$ is a fractional 
certificate for $x$.  To show that such an algorithm makes progress, one needs a property analogous to the fact that 
0-certificates and 1-certificates overlap. Kulkarni and Tal give a similar intersection property for the fractional certificates:
\begin{lemma}[\cite{Kulkarni_2016}, Lemma 6.2] \label{KulkarniTal}
Let $f : \{0, 1\}^n \to \{0, 1\}$ be a total Boolean function and $\{v_x\}_{x \in \{0, 1\}^n}$ be an optimal solution for the $\FC(f)$ linear program.
Then for any two inputs $x, y \in \{0, 1\}^n$ such that $f(x) \neq f(y)$, we have
\begin{equation*}
\sum_{i : x_i \neq y_i} \min\{v_x(i), v_y(i)\} \geq 1.
\end{equation*}
\end{lemma}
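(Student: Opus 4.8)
\emph{Proof plan.} I would prove the lemma directly from \emph{feasibility} of the vectors $v_x$ for their respective $\FC(f,\cdot)$ programs; optimality in fact plays no role. Fix inputs $x,y$ with $f(x)\neq f(y)$ and write $D=\{i : x_i\neq y_i\}$. The starting observation is that
\[
\sum_{i : x_i\neq y_i}\min\{v_x(i),v_y(i)\}
= \min_{D=S\sqcup T}\Big(\sum_{i\in S}v_x(i)+\sum_{i\in T}v_y(i)\Big),
\]
where the minimum ranges over all partitions of $D$ into $S$ and $T$, the minimizing one putting $i$ into $S$ precisely when $v_x(i)\le v_y(i)$. Hence it suffices to show that $\sum_{i\in S}v_x(i)+\sum_{i\in T}v_y(i)\ge 1$ for \emph{every} such partition.

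Given a partition $D=S\sqcup T$, I would consider the single input $z:=x^{S}$. Since $S\subseteq D$, flipping $x$ on $S$ turns each coordinate of $S$ into its (distinct) $y$-value, leaves each coordinate of $T=D\setminus S$ equal to its $x$-value, and does not touch coordinates outside $D$, where $x$ and $y$ already agree. Consequently $z$ also equals $y^{T}$, and the two relevant symmetric-difference sets are exactly $\{i : x_i\neq z_i\}=S$ and $\{i : y_i\neq z_i\}=T$.

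Now I would split into cases on the value $f(z)$. Because $f(x)\neq f(y)$ and $f$ is Boolean, $f(z)$ must disagree with at least one of $f(x),f(y)$. If $f(z)\neq f(x)$, then $z$ is one of the inputs over which the constraints of the $\FC(f,x)$ program range, so feasibility of $v_x$ gives $\sum_{i\in S}v_x(i)=\sum_{i : x_i\neq z_i}v_x(i)\ge 1$; adding the nonnegative quantity $\sum_{i\in T}v_y(i)$ yields the desired inequality. Symmetrically, if $f(z)\neq f(y)$, feasibility of $v_y$ gives $\sum_{i\in T}v_y(i)\ge 1$, and again $\sum_{i\in S}v_x(i)+\sum_{i\in T}v_y(i)\ge 1$. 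Applying this to the minimizing partition closes the argument.

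The argument is short, so there is no genuinely hard step: the only point that needs care is the middle one — checking that the \emph{same} flipped input $z=x^{S}$ simultaneously witnesses a constraint of $x$ (when $f(z)\neq f(x)$) or of $y$ (when $f(z)\neq f(y)$), and in particular that $z$ agrees with the common value of $x$ and $y$ on every coordinate outside $D$, so that the symmetric differences are exactly $S$ and $T$ and not anything larger. I would also note explicitly that, as the proof uses nothing beyond LP feasibility and $v_x,v_y\ge 0$, the statement holds for any feasible (not necessarily optimal) choice of the $v_x$'s.
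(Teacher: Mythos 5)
Your proof is correct. Note that the paper does not prove this lemma itself (it imports it from Kulkarni and Tal), and your argument is essentially the standard hybrid-input proof: the key step, taking $z = x^{S}$ so that $\{i : x_i \neq z_i\} = S$ and $\{i : y_i \neq z_i\} = T$ and then invoking the feasibility constraint of whichever of $x,y$ has $f$-value different from $f(z)$, is exactly the right mechanism, with the harmless generalization that you verify the bound for every partition of the disagreement set rather than only the minimizing one. Your closing observation is also accurate: only feasibility and nonnegativity of the $v_x$'s are used, so the conclusion holds for any feasible (not necessarily optimal) solution, which is in fact how the paper later applies it (to the modified vectors $v'_x$ it would otherwise have to re-justify).
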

\noindent However, it is not clear whether the algorithm makes progress in terms of reducing the fractional certificates of the 0-inputs.
We get around this problem by replacing $\min\{v_x(i), v_y(i)\}$ with the product $v_x(i)v_y(i)$ and putting that the 
sum of these terms over $i$ where $x_i \ne y_i$ is at least 1 as a constraint:

\begin{definition}[Expectational Certificate Complexity]
The \emph{expectational certificate complexity} $\EC(f)$ of $f$ is defined as the optimal value of the following program:
\begin{align*}
\text{minimize } \max_x \sum_{i=1}^n w_x(i) \hspace{.5cm} \text{s.t.} \quad &\sum_{i : x_i \neq y_i} w_x(i) w_y(i) \geq 1 \text{ for all $x, y$ s.t. $f(x) \neq f(y)$}, \\
& 0 \leq w_x(i) \leq 1 \text{ for all $x \in \{0, 1\}^n$, $i \in [n]$.}
\end{align*}
\end{definition}

\noindent We use the term ``expectational'' because the described algorithm on expectation queries at least weight 1 in total from input $y$, when querying the variables with probabilities being the weights of $x$.
While the informally described algorithm shows a quadratic upper bound on the worst-case expected complexity, in the next section we show a slight modification that directly makes a quadratic number of queries in the worst case.

\subsection{Quadratic Upper Bound on Randomized Query Complexity}

In this section we prove Theorem~\ref{ec-quad} (restated below).
\begin{reptheorem}{ec-quad}
\[\EC(f) \leq \R_0(f) \leq O(\EC(f)^2).\]
\end{reptheorem}
\begin{proof}
The first inequality follows from Lemma~\ref{ec-c} and $\C(f) \leq \R_0(f)$.

To prove the second inequality, we give randomized query algorithms for $f$ with 1-sided error $\epsilon$.
\begin{claim}
\label{onesided}
For any $b \in \{0, 1\}$, we have $\R_{\epsilon}^b(f) \leq \lceil\EC(f)^2/\epsilon\rceil$.
\end{claim}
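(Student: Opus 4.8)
The plan is to build, for each $b\in\{0,1\}$, a one-sided algorithm that searches for a $b$-certificate of the input by repeatedly ``querying around'' consistent $\bar b$-inputs, where the query probabilities come from an optimal solution $\{w_x\}$ of the $\EC(f)$ program (with value $c:=\EC(f)$; this optimum is attained since the feasible region is compact). The algorithm keeps the partial assignment $\rho$ of revealed values and runs in rounds: at the start of a round, if the revealed values already form a $b$-certificate it outputs $b$; if they form a $\bar b$-certificate it outputs $\bar b$; if the number of queries made has reached $\lceil c^2/\epsilon\rceil$ it outputs $\bar b$; otherwise it picks any $\bar b$-input $z$ consistent with $\rho$ and queries each still-unrevealed variable $i$ independently with probability $w_z(i)$ (meaningful since $0\le w_z(i)\le 1$).

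Correctness on $\bar b$-inputs is immediate: if $f(x)=\bar b$ then $x$ stays consistent with $\rho$ throughout, so the revealed values never form a $b$-certificate and the algorithm never outputs $b$; every other exit outputs $\bar b$, which is correct. Hence the only possible errors are on $b$-inputs, and I would bound their probability by analyzing the uncapped version on a fixed $x$ with $f(x)=b$. Let $U_t$ be the set of still-unrevealed variables after $t$ rounds and set $\Phi_t:=\sum_{i\in U_t}w_x(i)$, so $\Phi_0\le\sum_i w_x(i)\le c$ and $\Phi_t\ge 0$ always. The crux is that each round drives $\Phi$ down by at least $1$ in expectation: in round $t{+}1$ the algorithm picks a consistent $\bar b$-input $z_{t+1}$, and since $f(z_{t+1})=\bar b\neq b=f(x)$ while $z_{t+1}$ and $x$ both agree with $\rho_t$ (so $\{i:(z_{t+1})_i\neq x_i\}\subseteq U_t$), the $\EC$ constraint for the pair $(z_{t+1},x)$ gives
\[\bbE\bigl[\Phi_t-\Phi_{t+1}\mid\mathcal F_t\bigr]=\sum_{i\in U_t}w_{z_{t+1}}(i)\,w_x(i)\;\ge\;\sum_{i:\,(z_{t+1})_i\neq x_i}w_{z_{t+1}}(i)\,w_x(i)\;\ge\;1.\]
Moreover, if $\Phi_t=0$ then every variable of $\supp(w_x)$ has been revealed, and then the same family of $\EC$ constraints forbids any consistent $\bar b$-input $z'$ (such a $z'$ would have to differ from $x$ on a revealed variable of $\supp(w_x)$, impossible as $z'$ and $x$ are both consistent with $\rho_t$); hence the algorithm has already output $b$ and the number of rounds $T$ satisfies $T\le t$.

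From these two facts, $M_t:=\Phi_{t\wedge T}+(t\wedge T)$ is a nonnegative supermartingale with $\bbE[M_0]\le c$, so $\bbE[t\wedge T]\le c$ for every $t$ and thus $\bbE[T]\le c$; since a round makes at most $\sum_i w_z(i)\le c$ queries in expectation, a Wald-type summation ($\bbE[\#\text{queries}]\le\sum_{t\ge1}c\Pr[T\ge t]=c\,\bbE[T]$) gives $\bbE[\#\text{queries}\mid f(x)=b]\le c^2$. Capping at $\lceil c^2/\epsilon\rceil$ queries alters the output only on runs where the uncapped algorithm would have made $\ge c^2/\epsilon$ queries, which by Markov has probability $\le\epsilon$ conditioned on $f(x)=b$, and on those runs the capped algorithm outputs $\bar b$ — the only possible error. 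The worst-case query count is $\lceil c^2/\epsilon\rceil$ by construction, which is the claimed bound.

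The step I expect to require the most care is setting up the potential correctly: progress must be measured against the \emph{unknown true input} $x$ (whose weights $w_x$ we never see) rather than against the consistent $\bar b$-inputs that are actually queried, and one must verify that the expected decrease is at least $1$ regardless of which consistent $\bar b$-input the algorithm selects. This is exactly where the product form $w_x(i)w_y(i)$ in the $\EC$ constraint — rather than $\min\{v_x(i),v_y(i)\}$ from the fractional certificate LP of Lemma~\ref{KulkarniTal} — is what makes the two expectations (``which variables get queried'' and ``how much $w_x$-weight they carry'') multiply through cleanly.
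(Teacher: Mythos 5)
Your argument is essentially correct and reaches the claimed bound, but it is a genuinely different packaging from the paper's proof: you analyse the ``natural'' algorithm that, in each round, queries every still-unrevealed coordinate independently with probability $w_z(i)$ for a consistent $\bar b$-input $z$, and you control it through the potential $\Phi_t=\sum_{i\in U_t}w_x(i)$, a supermartingale/optional-stopping bound $\bbE[T]\le \EC(f)$, a Wald-type bound of $\EC(f)^2$ on the expected number of queries, and a final Markov step. The paper instead modifies the algorithm so that each iteration makes exactly \emph{one} query, sampled from the normalized distribution $w_y(i)/\sum_i w_y(i)$, runs for a fixed budget of $\lceil\EC(f)^2/\epsilon\rceil$ iterations, and compares upper and lower bounds on $\bbE[\sum_k T_k]$, where $T_k$ is padded with the value $1/\EC(f)$ after termination. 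Both proofs rest on the same engine — the product constraint guarantees that querying according to $w_y$ collects, in expectation, at least unit $w_x$-mass from the coordinates where $x$ and $y$ differ — and your supermartingale formulation arguably makes that mechanism more transparent; what the paper's one-query-per-iteration design buys is exactly the two points your write-up glosses over. First, your cap is checked only at round boundaries, so as described a single round can overshoot $\lceil\EC(f)^2/\epsilon\rceil$ by up to $n$ queries; you must truncate mid-round (harmless, since truncation only ever produces the allowed $\bar b$ output on $b$-inputs and never changes the answer on $\bar b$-inputs), whereas the paper's worst-case count is the budget by fiat. Second, your rounds can make zero queries, so on $\bar b$-inputs — where your potential argument gives no handle — you still owe an argument that the algorithm outputs with probability $1$: whenever the querying step executes there is also a consistent $b$-input $y$, and the constraint for the pair $(z,y)$ gives $\sum_{i\in U_t}w_z(i)\ge\sum_{i:z_i\neq y_i}w_z(i)w_y(i)\ge 1$, so each executed round makes at least one query with probability at least $1-1/e$ and only $n$ queries can ever occur; in the paper's version termination is trivial. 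With these two small repairs your proof is complete and correct.
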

The second inequality of Theorem~\ref{ec-quad} follows from Claim~\ref{onesided} by standard arguments of $\ZPP=\RP \cap \coRP$.
\begin{proof}[Proof of Claim~\ref{onesided}]
We prove the claim for $b=0$. The case $b=1$ is similar.

Let $\{w_x\}_{x \in \{0,1\}^n}$ be an optimal solution to the $\EC(f)$ program.
We say that an input $y$ is consistent with the queries made by $\clA$ on $x$ if $y_i=x_i$ for all queries $i \in [n]$ that 
have been made.
Also define a probability distribution $\mu_y(i) = w_y(i)/\sum_{i \in [n]} w_y(i)$ for each input $y \in \{0, 1\}^n$.

\begin{algorithm}[H]
\KwIn{$x \in \{0, 1\}^n$}

\begin{enumerate}
\item Repeat $\lceil \EC(f)^2/\epsilon \rceil$ many times:
\begin{enumerate}
\item Pick the lexicographically first consistent 1-input $y$.
If there is no such $y$, return 0.
\item Sample a position $i$ from $\mu_y$ and query $x_i$.
\item If the queried values form a $c$-certificate, return $c$.
\end{enumerate}
\item \label{end} Return 1.
\end{enumerate}

\caption{The randomized query algorithm $\clA$.}
\end{algorithm}

The complexity bound is clear as $\clA$ always performs at most $\lceil \EC(f)^2/\epsilon \rceil$ queries.

For correctness, note that the algorithm outputs 1 on all 1-inputs.
Thus assume $x$ is a 0-input from here on in the analysis.
Then we have to prove that $\clA$ outputs 0 with probability at least $1-\epsilon$.
This amounts to showing that the function reduces to a constant 0 function and the algorithm terminates within  
$\lceil \EC(f)^2/\epsilon \rceil$ iterations with probability at least $1-\epsilon$.  (For notational convenience, in what follows 
we will drop the ceilings and assume $\EC(f)^2/\epsilon$ is an integer.)

Define a random variable $T_k$ as
\begin{equation*}
T_k =
\begin{cases}
\frac{1}{\EC(f)}, & \text{if $\clA$ has terminated before the $k$-th iteration,} \\
w_x(i), & \text{if at the $k$-th iteration $\clA$ has queried $x_i$ for the first time,} \\
0, & \text{if $x_i$ has been queried before the $k$-th iteration.}
\end{cases}
\end{equation*}
Let $T = \sum_{k=1}^{\EC(f)^2/\epsilon} T_k$.
As $\sum_{i \in [n]} w_x(i) \leq \EC(f)$ by definition, $T > \EC(f)$ implies that $\clA$ has terminated before point \ref{end}.
Then it has returned 0, and the answer is correct.
Let $p = \Pr[T > \EC(f)]$.
We will prove that $p \geq 1-\epsilon$, in which case we would be done.

We continue by showing an upper and a lower bound on $\bbE[T]$.
\begin{itemize}
\item The maximum possible value of $T$ is at most
\begin{equation*}
T \leq \sum_{i \in [n]} w_x(i) + \frac{\EC(f)^2}{\epsilon} \cdot \frac{1}{\EC(f)} \leq \left(1 + \frac{1}{\epsilon}\right) \EC(f).
\end{equation*}
Therefore,
\begin{equation*}
\bbE[T] \leq p \cdot \left(1 + \frac{1}{\epsilon}\right) \EC(f) + (1-p) \cdot \EC(f) \leq \left(1+\frac{p}{\epsilon}\right) \EC(f).
\end{equation*}
\item Let $\clE_k$ be the event that $\clA$ has terminated before the $k$-th iteration.
In case $\clA$ performs the $k$-th iteration, let $y$ be consistent 1-input chosen and the random variable $i_k$ be the position that $\clA$ queries.
\begin{align*}
\bbE[T_k] &= \Pr[\clE_k] \cdot \frac{1}{\EC(f)} +  \Pr[\overline{\clE_k}] \cdot \bbE[w_x(i_k) \mid \overline{\clE_k}] \\
&\geq \Pr[\clE_k] \cdot \frac{1}{\EC(f)} + \Pr[\overline{\clE_k}] \cdot \sum_{i : x_i \neq y_i} w_x(i) \mu_y(i) \\
&= \Pr[\clE_k] \cdot \frac{1}{\EC(f)} + \Pr[\overline{\clE_k}] \cdot \sum_{i : x_i \neq y_i} w_x(i) w_y(i) / \sum_{i \in [n]} w_y(i)\\
&\geq \Pr[\clE_k] \cdot \frac{1}{\EC(f)} + \Pr[\overline{\clE_k}] \cdot \frac{1}{\EC(f)} \\
&= \frac{1}{\EC(f)},
\end{align*}
The first inequality here follows from the fact that any $i$ such that $x_i \neq y_i$ has not been queried yet, because $x$ and $y$ are both consistent with the queries made so far.
Thus, the inequality holds regardless of the randomness chosen by $\clA$.
The second inequality follows from the expectational certificate properties $\sum_{i:x_i\neq y_i} w_x(i)w_y(i) \geq 1$ and $\sum_{i \in [n]} w_y(i) \leq \EC(f)$.
By the linearity of expectation, we have that
\begin{equation*}
\bbE[T] = \sum_{k = 1}^{\EC(f)^2/\epsilon} \bbE[T_k] \geq \EC(f)/\epsilon.
\end{equation*}
\end{itemize}
Combining the two bounds together, we get $\frac{\EC(f)}{\epsilon} \leq \left(1+\frac{p}{\epsilon}\right) \EC(f).$
Thus, $p \geq 1-\epsilon$.
\end{proof}
\end{proof}

\subsection{Relation with the Fractional Certificate Complexity}

\begin{lemma}
$\FC(f) \leq \EC(f).$
\end{lemma}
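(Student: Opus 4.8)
The plan is to show the pointwise inequality $\FC(f,x) \le \sum_{i} w_x(i)$ for every input $x$, where $\{w_x\}_{x \in \{0,1\}^n}$ is an optimal solution to the $\EC(f)$ program; taking the maximum over $x$ then gives $\FC(f) = \max_x \FC(f,x) \le \max_x \sum_i w_x(i) \le \EC(f)$. Concretely, I would exhibit a feasible solution to the minimization linear program defining $\FC(f,x)$ with objective value $\sum_i w_x(i)$.

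The candidate is the obvious one: set $v_x(i) := w_x(i)$ for all $x$ and all $i \in [n]$. Nonnegativity, $v_x(i) \ge 0$, is immediate from $w_x(i) \ge 0$. It remains to check the covering constraints of the $\FC(f,x)$ program, namely that $\sum_{i : x_i \ne y_i} v_x(i) \ge 1$ for every $y$ with $f(y) \ne f(x)$.

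The one ingredient beyond the definitions is the box constraint $w_y(i) \le 1$ present in the $\EC$ program, which gives the termwise bound $w_x(i) \ge w_x(i)\, w_y(i)$. Hence, for any $y$ with $f(y) \ne f(x)$,
\[
\sum_{i : x_i \ne y_i} v_x(i) \;=\; \sum_{i : x_i \ne y_i} w_x(i) \;\ge\; \sum_{i : x_i \ne y_i} w_x(i)\, w_y(i) \;\ge\; 1,
\]
where the final inequality is precisely the expectational certificate constraint for the pair $(x,y)$. Thus $v_x$ is feasible for $\FC(f,x)$ with value $\sum_i v_x(i) = \sum_i w_x(i) \le \EC(f)$, and the claim follows.

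I do not expect any genuine obstacle: the argument is a one-line comparison of the two programs, and the point is simply that the product constraint $\sum w_x(i) w_y(i) \ge 1$ together with $w_y(i) \le 1$ dominates the linear covering constraint $\sum v_x(i) \ge 1$. This is exactly the sense in which $\FC$ is a relaxation of $\EC$ alluded to in the introduction.
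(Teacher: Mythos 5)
Your proposal is correct and matches the paper's proof exactly: both take an optimal (indeed any feasible) solution $\{w_x\}_x$ of the $\EC(f)$ program and observe that, since $w_y(i) \le 1$, we have $\sum_{i : x_i \neq y_i} w_x(i) \ge \sum_{i : x_i \neq y_i} w_x(i) w_y(i) \ge 1$, so the same weights are feasible for the $\FC(f,x)$ program with no larger objective value. There is nothing further to add.
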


\begin{proof}
We show that a feasible solution $\{w_x\}_x$ for $\EC(f)$ is also feasible for $\FC(f)$.
Since $0 \leq w_x(i) \leq 1$ for any $x, i$,
\begin{equation*}
\sum_{i : x_i \neq y_i} w_x(i) \geq \sum_{i : x_i \neq y_i} w_x(i)w_y(i) \geq 1,
\end{equation*}
and we are done.
\end{proof}

\begin{lemma}
$\EC(f) = O(\FC(f) \sqrt{\s(f)}).$
\end{lemma}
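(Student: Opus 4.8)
The plan is to start from an optimal fractional certificate solution $\{v_x\}_x$ and rescale it to obtain a feasible solution for the $\EC(f)$ program, paying a factor of roughly $\sqrt{\s(f)}$ in the objective. The natural first attempt is to set $w_x(i) = \sqrt{v_x(i)}$, but this does not immediately work because the $\EC$ constraint needs $\sum_{i:x_i\ne y_i} w_x(i) w_y(i) = \sum_{i:x_i\ne y_i} \sqrt{v_x(i)v_y(i)} \ge 1$, whereas Lemma~\ref{KulkarniTal} only gives $\sum_{i:x_i\ne y_i}\min\{v_x(i),v_y(i)\}\ge 1$, and $\sqrt{ab}\ge\min\{a,b\}$ only when $\min\{a,b\}\le 1$ — which does hold here since $v_x(i)\le 1$ can be assumed — so in fact $\sqrt{v_x(i)v_y(i)}\ge\min\{v_x(i),v_y(i)\}$ and the constraint is satisfied. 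Also $0\le w_x(i)=\sqrt{v_x(i)}\le 1$. So the constraints are fine; the work is all in bounding the objective $\sum_i \sqrt{v_x(i)}$.

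To bound $\sum_{i\in[n]}\sqrt{v_x(i)}$, I would use Cauchy--Schwarz: $\sum_i \sqrt{v_x(i)} \le \sqrt{|\supp(v_x)|}\cdot\sqrt{\sum_i v_x(i)} = \sqrt{|\supp(v_x)|}\cdot\sqrt{\FC(f,x)}$. It therefore suffices to argue that an optimal (or near-optimal) fractional certificate $v_x$ can be taken with support of size $O(\s(f))$, or more precisely that $|\supp(v_x)| = O(\s(f))$ after possibly a mild modification. The key combinatorial fact is that every sensitive block $B$ of $x$ must contain a sensitive coordinate of $x^B$, and more usefully, by a standard argument the minimal sensitive blocks of $x$ can each be shrunk so that the union of all blocks actively constraining the LP lies within a set of size $O(\s(f)\cdot\text{something})$; cleaner is to invoke that $\FC(f,x)$ equals $\fbs(f,x)$ and the dual optimum is supported on minimal sensitive blocks, each of which has size at most $\s(f)$, and by LP basic-solution / Carath\'eodory arguments only $O(\C(f,x))\le$ ... — here I need to be careful. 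The cleanest route: restrict attention to minimal sensitive blocks; every minimal sensitive block $B$ satisfies $|B|\le\s(f)$ (flipping $B$ in $x$ gives $x^B$ which is sensitive on every coordinate of $B$, else a proper sub-block would be sensitive), and an optimal $v_x$ for the LP dual to $\fbs$ can be supported so that $\supp(v_x)\subseteq\bigcup B_j$ over the minimal blocks $B_j$ appearing with positive weight in a basic optimal dual solution, whose number is bounded by the LP dimension. Combining, $|\supp(v_x)| = O(\fbs(f,x)\cdot\s(f)) = O(\FC(f)\,\s(f))$, giving $\sum_i\sqrt{v_x(i)} = O(\sqrt{\FC(f)\s(f)}\cdot\sqrt{\FC(f)}) = O(\FC(f)\sqrt{\s(f)})$.

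The main obstacle I anticipate is the support-size bound: naively an optimal fractional certificate could spread tiny weights over many coordinates, so one must argue that without loss of generality it is supported on the union of a bounded number of minimal sensitive blocks. I expect the right tool is either (i) passing to a basic optimal solution of the $\fbs(f,x)$ LP, whose support size is controlled by the number of tight coordinate constraints, which is $O(\fbs(f,x))$, each contributing a block of size $\le\s(f)$; or (ii) a direct greedy/covering argument showing $x$ has a fractional certificate supported on $O(\s(f)^2)$ coordinates (iteratively pick a sensitive block, it has size $\le\s(f)$, and after removing its coordinates block sensitivity drops by one, so $\bs(f,x)\le\s(f)$ iterations suffice, yielding support $\le\s(f)^2$ and $\sum\sqrt{v_x(i)}\le\sqrt{\s(f)^2}\cdot\sqrt{\FC(f)}$, which is only $O(\s(f)\sqrt{\FC(f)})$ — weaker, so I would prefer route (i)). Once the support bound $|\supp(v_x)| = O(\FC(f)\,\s(f))$ is in hand, the rest is the one-line Cauchy--Schwarz estimate and the verification of the two $\EC$ constraints above, both of which are routine.
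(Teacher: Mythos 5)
Your overall skeleton is the same as the paper's: start from a fractional certificate, pass to square roots, verify the $\EC$ constraint via Lemma~\ref{KulkarniTal} together with $\sqrt{ab}\ge\min\{a,b\}$ for $a,b\in[0,1]$, and then control the objective $\sum_i\sqrt{v_x(i)}$ by ruling out many tiny entries. The gap is in your justification of the support bound, which is where all the work lies. Your preferred route (i) rests on the claim that the number of tight coordinate constraints of the $\fbs(f,x)$ packing LP (equivalently, the number of positive blocks in a basic dual optimum) is $O(\fbs(f,x))$. That claim is false in general: for \textsc{Majority} at $x=0^n$ the minimal sensitive blocks have size about $n/2$, $\fbs(f,x)\approx 2$, and any optimal dual solution saturates \emph{all} $n$ coordinate constraints (its value forces every coordinate's coverage to equal $1$), while no solution with $O(1)$ positive blocks can be optimal since all blocks pairwise intersect. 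Your fallback route (ii) is also broken as stated: it uses $\bs(f,x)\le\s(f)$, which is false, and restricting an optimal $v_x$ to the union of a maximal disjoint family of sensitive blocks need not remain feasible. So as written the objective bound $\sum_i\sqrt{v_x(i)}=O(\FC(f)\sqrt{\s(f)})$ is not established.

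The statement you need, $|\supp(v_x)|=O(\FC(f,x)\,\s(f))$ for an optimal $v_x$, is nevertheless true, and a correct argument is close to what you wrote: take any dual optimum $u_x$ and shrink every positively weighted block to a minimal sensitive block inside it (this preserves feasibility and the objective, and minimal sensitive blocks have size at most $\s(f)$); by complementary slackness $\supp(v_x)$ lies inside the set of saturated coordinates, and the number of saturated coordinates is at most $\sum_B u_x(B)\,|B|\le \fbs(f,x)\,\s(f)$ --- note it is the saturated coordinates you should count, not the positive blocks. With that, your Cauchy--Schwarz step finishes the proof. The paper avoids LP duality altogether: it zeroes out every entry $v_x(i)<1/(3\s(f))$ and multiplies the surviving entries by $3/2$ (capped at $1$); feasibility survives because a minimal sensitive block has at most $\s(f)$ coordinates, so the deleted entries contribute at most $1/3$ to each constraint, and then $\sum_i\sqrt{v'_x(i)}\le\sqrt{3\s(f)}\,\sum_i v'_x(i)=O(\FC(f)\sqrt{\s(f)})$ since every surviving entry is at least $1/(3\s(f))$. (If you follow that route, one further point needs care: the modified solution is no longer optimal, so you must use the fact that the intersection property of Lemma~\ref{KulkarniTal} holds for any feasible collection, which its proof does give.)
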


\begin{proof}
Let $\{v_x\}_x$ be an optimal solution to the fractional certificate linear program for $f$.
We first modify each $v_x$ to a new feasible solution $v'_x$ by eliminating the entries $v_x(i)$ that are very small, and boosting the large entries by a 
constant factor.
Namely, let
\begin{equation*}
v'_x(i) = \begin{cases} \min\left\{\frac{3}{2}v_x(i), 1\right\}, & \text{if } v_x(i) \geq \frac{1}{3s(f)}, \\ 0, & \text{otherwise.} \end{cases}
\end{equation*}
 We first claim that $\{v'_x\}_x$ is still a feasible solution.  Fix any $x \in \{0,1\}^n$, and let $B$ be a minimal sensitive block for $x$.
 As $v_x$ is part of a feasible solution, we have
 \begin{equation*}
 1 \le \sum_{i \in B} v_x(i) = \sum_{i \in B, \atop v_x(i) < 1/3s(f)} v_x(i) + \sum_{i \in B, \atop v_x(i) \ge 1/3s(f)} v_x(i) 
 \le \frac{1}{3} +  \sum_{i \in B, \atop v_x(i) \ge 1/3s(f)} v_x(i) .
 \end{equation*}
 The second line follows because $|B| \le s(f)$, as $B$ is a minimal sensitive block and therefore every index in $B$ is sensitive.
 Rearranging the last inequality, we have
$ \sum_{i \in B \atop v_x(i) \ge 1/3s(f)} v_x(i) \ge \frac{2}{3},$
and therefore,
$ \sum_{i \in B} v'_x(i) \ge 1.$

Next, $w_x(i) := \sqrt{v'_x(i)}$ is a feasible solution to the expectational certificate program, as
\begin{equation*}
\sum_{i : x_i \neq y_i} w_x(i)w_y(i) = \sum_{i: x_i \neq y_i}\sqrt{v'_x(i)v'_y(i)} \geq \sum_{i: x_i \neq y_i}\min\{v'_x(i), v'_y(i)\} \geq 1.
\end{equation*}
 The second inequality holds by Lemma \ref{KulkarniTal}.
 
Now that we have shown that $\{w_x\}_x$ forms a feasible solution to the expectation certificate program, it 
remains to bound its objective value:
\begin{equation*}
\sum_{i\in [n]} w_x(i) = \sum_{i \in [n]}\sqrt{v'_x(i)} = \sum_{i: v'_x(i) \neq 0}\frac{v'_x(i)}{\sqrt{v'_x(i)}} \leq \sqrt{3s(f)}\sum_{i \in [n]} v'_x(i) \leq \sqrt{3s(f)}\frac{3}{2}\FC(f),\qedhere
\end{equation*}
where the first inequality follows from $v_x'(i) \geq v_x(i) \geq 1/3s(f)$ for $v_x'(i) \neq 0$.
\end{proof}

Since $\s(f) \leq \FC(f)$ and $\FC(f) \leq \R(f)$, we immediately get
\begin{corollary}
$\EC(f) = O(\FC(f)^{3/2}) = O(\R(f)^{3/2}).$
\end{corollary}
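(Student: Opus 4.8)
The plan is simply to feed the preceding lemma $\EC(f) = O(\FC(f)\sqrt{\s(f)})$ into two standard comparisons among complexity measures, so essentially no new work is required.

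First I would recall the elementary inequality $\s(f) \le \FC(f)$. To see it, fix an input $x$ achieving $\s(f) = \s(f,x)$ and let $v_x$ be any feasible solution to the $\FC(f,x)$ linear program. For each of the $\s(f,x)$ coordinates $j$ at which $x$ is sensitive, the singleton $\{j\}$ is a sensitive block, so the constraint associated to $y = x^{\{j\}}$ reads $v_x(j) \ge 1$; summing over these coordinates gives $\sum_{i\in[n]} v_x(i) \ge \s(f,x)$, hence $\FC(f) \ge \FC(f,x) \ge \s(f)$. (Dually, one may place weight $1$ on each singleton sensitive block in the $\fbs$ program, which is feasible since the blocks are disjoint, and use $\FC = \fbs$.) Substituting $\s(f) \le \FC(f)$ into the lemma immediately yields $\EC(f) = O(\FC(f)\sqrt{\FC(f)}) = O(\FC(f)^{3/2})$, which is the first claimed bound.

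For the second equality I would invoke the known relation $\FC(f) = O(\R(f))$: by Gilmer, Saks and Srinivasan one has $\FC(f) = \Theta(\RC(f))$ for Aaronson's randomized certificate complexity $\RC$, and Aaronson showed $\RC(f) = O(\R_2(f)) = O(\R(f))$; alternatively, for the zero-error regime the cruder chain $\FC(f) \le \C(f) \le \R_0(f)$ already stated in the excerpt suffices. Plugging $\FC(f) = O(\R(f))$ into $\EC(f) = O(\FC(f)^{3/2})$ gives $\EC(f) = O(\R(f)^{3/2})$, and raising to the $2/3$ power recovers the bound $\EC(f)^{2/3} = O(\R(f))$ advertised in the introduction.

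The ``hard part'' is therefore nonexistent — the corollary is a direct composition of the lemma with two textbook inequalities. The only point requiring any attention is bookkeeping of constants: one must check that the absolute constant hidden in the $O(\cdot)$ of the lemma, the constant in $\FC(f) = O(\R(f))$, and the constant introduced when chaining the two compose into a single absolute constant, which they do.
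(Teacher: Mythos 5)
Your proposal is correct and matches the paper's own (one-line) argument: combine the lemma $\EC(f)=O(\FC(f)\sqrt{\s(f)})$ with the standard facts $\s(f)\leq\FC(f)$ and $\FC(f)\leq\R(f)$. One minor caveat: your ``alternative'' chain $\FC(f)\leq\C(f)\leq\R_0(f)$ would only yield $O(\R_0(f)^{3/2})$, not the stated bound in terms of the bounded-error measure $\R(f)$, so the route via $\FC(f)=O(\R(f))$ is the one to keep.
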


\subsection{Relation with the Certificate Complexity}

\begin{lemma} \label{ec-c}
$\EC(f) \leq \C(f).$
\end{lemma}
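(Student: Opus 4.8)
The plan is to exhibit an explicit feasible solution $\{w_x\}_x$ to the $\EC(f)$ program whose objective value is at most $\C(f)$. The natural candidate is to take $w_x$ to be the $0/1$ indicator vector of a minimum-size certificate for $x$: fix, for each $x$, a shortest $f(x)$-certificate $A_x$ consistent with $x$, and set $w_x(i) = 1$ if $i \in Q_{A_x}$ and $w_x(i) = 0$ otherwise. Then $\sum_{i \in [n]} w_x(i) = |A_x| = \C(f,x) \le \C(f)$, so the objective value is exactly $\max_x \C(f,x) = \C(f)$, and the box constraints $0 \le w_x(i) \le 1$ hold trivially.

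It remains to verify the nonlinear constraint: for every pair $x,y$ with $f(x)\neq f(y)$, we need $\sum_{i:x_i\neq y_i} w_x(i)w_y(i) \ge 1$. With the indicator choice above, $w_x(i)w_y(i) = 1$ exactly when $i \in Q_{A_x}\cap Q_{A_y}$, so the sum counts $|\{i \in Q_{A_x}\cap Q_{A_y} : x_i \neq y_i\}|$. The key classical fact — the same one underlying the $\D(f)\le \C^0(f)\C^1(f)$ algorithm recalled in the section preamble — is that a $0$-certificate and a $1$-certificate consistent with, respectively, a $0$-input and a $1$-input must fix some common variable to opposite values. Concretely: since $f(x)\neq f(y)$, the input $y$ cannot be consistent with $A_x$ (it would force $f(y)=f(x)$), so there is a coordinate $i \in Q_{A_x}$ with $y_i \neq A_x(i) = x_i$; moreover any such $i$ must also lie in $Q_{A_y}$, because otherwise flipping $x$ to agree with $A_x$ on $Q_{A_x}$ but taking $y$'s values elsewhere would produce an input consistent with both $A_x$ and $A_y$, contradicting $f(x)\neq f(y)$. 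Hence there is at least one index $i$ with $i\in Q_{A_x}\cap Q_{A_y}$ and $x_i\neq y_i$, giving $\sum_{i:x_i\neq y_i} w_x(i)w_y(i) \ge 1$.

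The one subtlety — and the step I would be most careful about — is the argument that a witnessing coordinate lies in \emph{both} $Q_{A_x}$ and $Q_{A_y}$, rather than just arguing that $Q_{A_x}$ and $Q_{A_y}$ intersect somewhere. The clean way to see it: consider the assignment that agrees with $A_x$ on $Q_{A_x}$ and with $A_y$ on $Q_{A_y}$. If these two partial assignments were consistent (agreed on $Q_{A_x}\cap Q_{A_y}$), this common refinement would be nonempty and every input in it would satisfy $f = f(x)$ and $f = f(y)$ simultaneously, impossible. So $A_x$ and $A_y$ disagree on some $i \in Q_{A_x}\cap Q_{A_y}$, i.e. $A_x(i)\neq A_y(i)$; since $x$ is consistent with $A_x$ and $y$ with $A_y$, this gives $x_i = A_x(i) \neq A_y(i) = y_i$, exactly the coordinate we need. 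I would write this intersection lemma out explicitly, as it is the heart of the proof; the rest is bookkeeping.
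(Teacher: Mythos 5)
Your construction and verification are essentially the paper's own proof: take $w_x$ to be the $0/1$ indicator of a shortest certificate $A_x$, note the objective value is $\max_x \C(f,x)=\C(f)$, and establish feasibility from the fact that $A_x$ and $A_y$ must disagree on some commonly fixed coordinate, since otherwise their common refinement would contain an input forcing $f(x)=f(y)$. One small caveat: your first-paragraph claim that \emph{any} coordinate $i\in Q_{A_x}$ with $y_i\neq x_i$ must also lie in $Q_{A_y}$ is not true in general, but the clean intersection argument in your final paragraph --- the one you say you would write out --- is correct and is precisely the paper's argument, so the proposal stands.
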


\begin{proof}
We construct a feasible solution $\{w_x\}_x$ for $\EC(f)$ from $\C(f)$.
Let $A_x$ be the shortest certificate for $x$.
Assign $w_x(i) = 1$ iff $i \in A_x$, otherwise let $w_x(i) = 0$.
Let $x, y$ be any two inputs such that $f(x) \neq f(y)$.
There is a position $i$ where $A_x(i) \neq A_y(i)$, otherwise there would be an input consistent with both $A_x$ and $A_y$, which would give a contradiction.
Therefore, $w_x(i)w_y(i) \geq 1$.
The value of this solution is $\max_x \sum_{i\in [n]} w_x(i) = \max_x \C(f, x) = \C(f)$.
\end{proof}

As $\FC(f) \leq \EC(f) \leq \C(f) \leq \FC(f)^2$, there can be at most quadratic separation between $\EC(f)$ and $\C(f)$.
We show that this is achieved by the example of Gilmer et.~al. that separates $\FC(f)$ and $\C(f)$ quadratically:
\begin{theorem}[\cite{Gilmer_2016}, Theorem 32]
For every $n \in \mathbb N$ sufficiently large, there is a function $f : \{0, 1\}^{n^2} \to \{0, 1\}$ such that $\FC(f) = O(n)$ and $\C(f) = \Omega(n^2)$.
\end{theorem}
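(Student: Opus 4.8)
\emph{Structural reduction.} At a fixed input $x$, $\C(f,x)$ equals the size of a minimum hitting set for the minimal sensitive blocks of $x$ — a set of positions $S$ (fixed to the values of $x$) is a certificate iff it meets every sensitive block, since if $S$ misses a block $B$ then $x^B$ agrees with $x$ on $S$ but has a different $f$-value — while $\FC(f,x)=\fbs(f,x)$ is exactly the fractional (LP) relaxation of this hitting-set problem. So producing an input $x_0$ with $\FC(f,x_0)=O(n)$ and $\C(f,x_0)=\Omega(n^2)$ is the same as exhibiting a set system, namely the minimal sensitive blocks at $x_0$, whose fractional hitting number is $O(n)$ but whose integral hitting number is $\Omega(n^2)$: a set-cover instance of integrality gap $\Omega(n)$. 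Because the integrality gap of the set-cover LP is $O(\log N)$, where $N$ is the number of elements to be covered (here the number of sensitive blocks), such a gap forces $N=2^{\Omega(n)}$ minimal sensitive blocks concentrated at one input. Any construction must therefore pack exponentially many sensitive blocks around $x_0$.

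\emph{Candidate construction and the easy bounds.} The plan is to realize these blocks as the terms of a monotone DNF $f$ on the $n^2$ variables, so that the all-zeros input $0^{n^2}$ is a $0$-input whose minimal sensitive blocks are precisely the terms. I would take a family $\mathcal{T}$ of $\Theta(n)$-sized subsets of $[n^2]$ with two features. First, the uniform weighting $v(i)=c/n$ is a feasible fractional certificate at $0^{n^2}$ — each term of size $\ge n/c$ receives total weight $\ge 1$ — giving $\FC(f,0^{n^2})\le cn=O(n)$. Second, every subset of $[n^2]$ of size $\ge \epsilon n^2$ should contain some term; then any hitting set has size $\ge (1-\epsilon)n^2$, forcing $\C(f,0^{n^2})=\Omega(n^2)$. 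This density property is the integrality-gap lower bound and is exactly where the exponentially many blocks are needed: choosing $2^{\Theta(n)}$ random $\Theta(n)$-subsets and union-bounding over the $2^{\Theta(n^2)}$ large sets yields such a family. Since the certificate lower bound needs only this one input, $\C(f)\ge \C(f,0^{n^2})=\Omega(n^2)$ follows immediately.

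\emph{The main obstacle.} The genuine difficulty is the global bound $\FC(f)=\max_x \FC(f,x)=O(n)$, and it is here that a bare monotone DNF is insufficient. At $1$-inputs there is no problem, since $\FC(f,x)\le \C(f,x)$ and a $1$-certificate is just a satisfied term of size $\Theta(n)$. The trouble is at intermediate $0$-inputs: if $x$ agrees with all but one coordinate of some term $T$, then flipping that single missing coordinate satisfies $T$, so $\{i\}$ is itself a sensitive block and any feasible $v$ must set $v(i)\ge 1$. An adversarial $x$ that nearly satisfies many terms at once can thereby force $\Omega(n)$ — or potentially many more — coordinates to carry unit weight, destroying the $O(n)$ bound; the uniform weighting that worked at $0^{n^2}$ no longer applies. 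The heart of the construction is therefore to arrange the terms, or to modify $f$ away from a plain DNF, so that at \emph{every} input the sensitive blocks still admit a fractional cover of weight $O(n)$ — equivalently, so that no input nearly satisfies too many terms simultaneously — while preserving both the density needed for $\C=\Omega(n^2)$ and the large term size. I expect this to be the main obstacle, and it is precisely the point at which one invokes the specific construction of Gilmer et al.~\cite{Gilmer_2016} rather than a generic random family. Once $\FC(f)=O(n)$ is secured at every input, combining it with $\C(f)=\Omega(n^2)$ gives the theorem and realizes the extremal gap permitted by $\C(f)\le \FC(f)^2$.
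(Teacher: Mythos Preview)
Your proposal is a reasonable outline but is, by your own admission, incomplete: you correctly diagnose that a bare random monotone DNF on $n^2$ variables runs aground at intermediate $0$-inputs, and then stop, saying this is where one ``invokes the specific construction of Gilmer et al.'' without saying what that construction is. That missing piece is the entire proof, and the actual construction is structurally different from the direction you were pushing.

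The approach summarized in the paper does not build one monolithic function on $n^2$ bits. Instead, Gilmer et al.\ first exhibit a base function $g:\{0,1\}^n\to\{0,1\}$ with a \emph{one-sided} separation: $\FC^0(g)=\Theta(1)$ and $\C^0(g)=\Theta(n)$, while on the $1$-side simply $\FC^1(g)=\C^1(g)=n$. The crucial point is that $\FC^0(g,x)=O(1)$ at \emph{every} $0$-input of $g$, not merely a distinguished one. They then set $f=\textsc{Or}\bigl(g(x^{(1)}),\ldots,g(x^{(n)})\bigr)$. Composition with $\textsc{Or}$ behaves multiplicatively on the $0$-side and selectively on the $1$-side (their Proposition~31): at any $0$-input of $f$ each block $x^{(j)}$ is a $0$-input of $g$ contributing $O(1)$ fractional weight, so $\FC^0(f)=n\cdot O(1)=O(n)$; at a $1$-input a single satisfied copy of $g$ gives a certificate of size $n$. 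Meanwhile $\C^0(f)\ge n\cdot\C^0(g)=\Omega(n^2)$.

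This factoring is precisely what dissolves your obstacle. There are no troublesome ``intermediate'' $0$-inputs of $f$ to analyze ad hoc, because the uniform bound $\FC^0(g)=O(1)$ is baked in at the level of $g$ and the $\textsc{Or}$ structure aggregates it cleanly. The combinatorial work you anticipated---packing many sensitive blocks while keeping fractional cover bounded at every input---is pushed down into the construction of $g$ on only $n$ bits, which is where the argument in \cite{Gilmer_2016} actually lives, not at the $n^2$ level you were targeting.
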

Their construction for $f$ is as follows.
First a function $g : \{0, 1\}^n \to \{0, 1\}$ is exhibited such that $\FC^0(g) = \Theta(1)$, $\C^0(g) = \Theta(n)$ and $\FC^1(g) = \C^1(f) = n$.
The function $f : \{0, 1\}^{n^2} \to \{0, 1\}$ is defined as a composition $\textsc{Or}(g(x^{(1)}), \ldots, g(x^{(n)}))$.
This gives $\FC(f) = \max\{n \FC^0(g), \FC^1(g)\} = \Theta(n)$ and $\C(f) \geq n \C^0(g) = \Theta(n^2)$ (both properties follow by Proposition 31 in their paper).

Let us construct a feasible solution $w$ for $\EC(f)$.
For any $x = x^{(1)} \dots x^{(n)}$ such that $f(x) = 1$, let $j$ be the first index such that $g(x^{(j)}) = 1$.
Let $S \subseteq [n^2]$ be the set of positions that correspond to $x^j$.
Let $w_x(i) = 1$ for each position $i$ in $S$, and $w_x(i) = 0$ for all other positions.
Then $\sum_{i=1}^{n^2} w_x(i) = n$.

On the other hand, let $\{v_x\}_{x \in \{0, 1\}^n}$ be an optimal solution to $\FC(f)$.
For any $x \in \{0, 1\}^{n^2}$ such that $f(x) = 0$, let $w_x(i) = v_x(i)$ for all $i \in [n^2]$.
Then $\sum_{i=1}^{n^2} w_x(i) = \FC(f, x) = O(n)$.

Now, for any two inputs $x, y$ such that $f(x) = 1$ and $f(y) = 0$, let $j$ be the smallest index such that $g(x^{(j)}) = 1$, then we have $g(y^{(j)}) = 0$.
By construction,
\begin{equation*}
\sum_{i : x_i \neq y_i} w_x(i) w_y(i) = \sum_{i : x_i \neq y_i} w_y(i) \geq 1.
\end{equation*}
Hence $\{w_x\}_x$ is a feasible solution to the expectational certificate and $\EC(f) = n$.

\section{Minimum Query Corruption Bound and Partition Bound} \label{prt}
In this section we prove Theorem~\ref{prt_ub:this1}. We first consider the \emph{query corruption bound} and \emph{minimum query corruption bound}.
\begin{definition}[Query Corruption Bound and Minimum Query Corruption Bound for product distributions]
\label{corruption}
Let $\epsilon \in [0,1/2)$ and $\mu : \{0,1\}^n \rightarrow [0,1]$ be a probability distribution over the inputs.
For a $b \in \{0, 1\}$, let an assignment $A$ be an $\epsilon$-error $b$-certificate under $\mu$, if
\begin{equation*}\Pr_{x \sim \mu}\left[f(x) \neq b \delim x \in A\right] \leq \epsilon.\end{equation*}
Define the \emph{query corruption bound} for $b$, distribution $\mu$ and error $\epsilon$ as
\begin{equation*}\corr_{\epsilon}^{b,\mu}(f) = \min\{|A| \mid \text{$A$ is an $\epsilon$-error $b$-certificate under $\mu$}\}.\end{equation*}
The query corruption bound of $f$ is defined as $\corr_\epsilon(f) = \max_{\mu} \max_b \corr_{\epsilon}^{b,\mu}(f)$, where $\mu$ ranges over all distributions on $\{0,1\}^n$.
The \emph{minimum} query corruption bound of $f$ \emph{for product distributions} is defined as $\corr^\times_{\min, \epsilon}(f) = \max_{\mu} \min_b \corr_{\epsilon}^{b,\mu}(f)$, where $\mu$ ranges over all product distributions on $\{0,1\}^n$.
\end{definition}

We now proceed to the proof of Theorem~\ref{prt_ub:this1} (restated below).
\begin{reptheorem}{prt_ub:this1}
Let $\epsilon \in [0,1/2)$ and $\mu$ a product distribution over the inputs.
Then
\begin{equation*}\D_{4\epsilon}^{\mu}(f) = O(\corr^\times_{\min, \epsilon}(f) \cdot \bs(f)).\end{equation*}
\end{reptheorem}

In the proof we will have restrictions of probability distributions.
Let $\eta$ be a probability distribution over $\{0,1\}^n$, $x \in \{0,1\}^n$ be a $n$-bit string, and $Q \subseteq \{1, \ldots, n\}$ be a set of indices. The restriction of $x$ to the indices of $Q$, $(x_j:j \in Q)$, will be denoted by $x_Q$. Then the distribution $\eta \mid_{x_Q}$ is the distribution obtained by conditioning $\eta$ on the event that the bits in the locations in $Q$ agree with $x$. Formally, for each $y \in \{0,1\}^n$
\[\eta_{x_Q}(y)=\left\{ \begin{array}{ll}  \frac{\eta(y)}{\sum_{z: \forall i \in Q, z_i=x_i}\eta(z)} & \mbox{if $\forall i \in Q, y_i=x_i$,} \\ 0 & \mbox{otherwise.}\end{array}\right.\]
\begin{proof}[Proof of Theorem~\ref{prt_ub:this1}]
We present a deterministic query algorithm, and analyse its performance for inputs sampled according to $\mu$. Examine the following algorithm:

\begin{algorithm}[H]
\KwIn{$x \in \{0, 1\}^n$}

\begin{enumerate}
\item Set $t_0 ,t_1 \gets 0, i \gets 1, \eta^{(1)} \gets \mu$.
\item Repeat:
\begin{enumerate}
\item \label{pick} Pick a shortest $\epsilon$-error certificate $A$ under $\eta$.
\item \label{query} Query all the variables in $Q_A$ that are still unknown.
\item Let $A$ be an $\epsilon$-error $b$-certificate for some $b \in \{0, 1\}$. Set $t_b \leftarrow t_b+1$.
\item \label{ret2} If the results of the queries are consistent with $A$, return $b$.
\item \label{ret3}If $t_b = 2\bs(f)$, return $b$.
\item $\eta^{(i+1)} \gets \eta^{(i)} \mid_{x_{Q_A}}$.
\item $i \gets i+1$.
\end{enumerate}
\end{enumerate}

\caption{The deterministic query algorithm $\mathcal B$.}
\end{algorithm}
For each $i=2, \ldots, 4 \bs(f)$, define $T^{(i)}$ to be the event that $\calB$ completes at least $i-1$ iterations and define $T^{(1)}$ to be the \emph{true} event. Let $i$ be arbitrary, and assume that $T^{(i)}$ occurs. Then $A^{(i)}$ denotes the $\epsilon$-error certificate (under $\eta^{(i)}$) picked in the $i$-th iteration in step~\ref{pick}. 
Let $b^{(i)} \in \{0, 1\}$ be the value approximately certified by $A^{(i)}$ under $\eta^{(i)}$.
Let $E^{(i)} \subseteq A^{(i)}$ denote the set of inputs $y \in A^{(i)}$ such that $f(y)\neq b^{(i)}$. 
Recall from Section~\ref{preliminaries} $Q_{A^{(i)}}$ is the set of variables set by $A^{(i)}$.
For each assignment $s \in \{0,1\}^{Q_{A^{(i)}}}$ to the variables fixed by $A^{(i)}$ and subset $U \subseteq A^{(i)}$, let $U\oplus s$ denote the shift of $U$ by the vector $s$. Formally ($\textquoteleft\oplus$' stands for bitwise exlusive or),
\[U\oplus s:=\{y \in \{0,1\}^n: \forall j \in Q_{A^{(i)}}, y_j=A^{(i)}_j\oplus s_j\mbox{ and }\exists z \in U\mbox{ such that } \forall j \notin Q_{A^{(i)}}, y_j=z_j\}.\]
For $i \geq 2$, define $\mathcal L^{(i)}$ to be the set of variables queried in first $i-1$ iterations and define $\mathcal L^{(1)}:=\varnothing$. Note that $\eta^{(i)}=\mu \mid_{x_{\mathcal L^{(i)}}}$, and $\eta^{(i)}$ is a product distribution.

Define all the above random variables to be $\bot$ if $T^{(i)}$ does not take place. Now define \[X^{(i)}=\left\{\begin{array}{ll}1 & \mbox{if $T^{(i)}$ occurs and $x \in \bigcup_{s \in \{0,1\}^{Q^{(i)}}} E^{(i)}\oplus s$,}\\ 0 & \mbox{otherwise.}\end{array}\right.\]
First we bound the number of queries made by $\calB$. Since $\clB$ terminates when either $t_0 = 2\bs(f)$ or $t_1 = 2\bs(f)$, it performs at most $4\bs(f)-1$ many iterations.
On the other hand since $\eta^{(i)}$ is a product distribution for each $i$, therefore $|A^{(i)}| \leq \corr^\times_{\min, \epsilon}(f)$. Therefore, the algorithm makes $O(\corr^\times_{\min, \epsilon}(f) \cdot \bs(f))$ many queries.

Now we prove that it errs on at most $4\epsilon$ fraction of the inputs according to $\mu$.
\begin{claim}
\label{noshift}
For every $i$ and $s \in \{0,1\}^{Q_{A}}, \Pr[x \in E^{(i)}\oplus s \mid T^{(i)}, x \in A^{(i)}\oplus s] \leq \epsilon$.
\end{claim}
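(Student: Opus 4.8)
The plan is to reduce Claim~\ref{noshift} to the defining property of $\epsilon$-error certificates under the distribution $\eta^{(i)}$, using the fact that $\eta^{(i)}$ is a product distribution and that $s$ shifts only the coordinates in $Q_{A^{(i)}}$. The key observation is that $A^{(i)}\oplus s$ is the subcube obtained from the subcube $A^{(i)}$ by flipping the fixed values on $Q_{A^{(i)}}$ (leaving the free coordinates free), and $E^{(i)}\oplus s$ is exactly the set of inputs in $A^{(i)}\oplus s$ whose $f$-value disagrees with $b^{(i)}$ — because $E^{(i)}$ was defined as the set of $y\in A^{(i)}$ with $f(y)\neq b^{(i)}$, and the shift operation is a bijection between $A^{(i)}$ and $A^{(i)}\oplus s$ that acts only on coordinates in $Q_{A^{(i)}}$.

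First I would condition on $T^{(i)}$ throughout, so that $A^{(i)}$, $b^{(i)}$, $E^{(i)}$ and $Q_{A^{(i)}}$ are well-defined, and recall that $x$ is distributed according to $\mu$. The quantity $\Pr[x\in E^{(i)}\oplus s\mid x\in A^{(i)}\oplus s]$ is then $\Pr_{x\sim\mu}[f(x)\neq b^{(i)}\mid x\in A^{(i)}\oplus s]$. Now I would use that $\mu$ is a product distribution: conditioning on $x\in A^{(i)}\oplus s$ only fixes the coordinates in $Q_{A^{(i)}}$ (to the values $A^{(i)}_j\oplus s_j$) and leaves the remaining coordinates distributed exactly as they are under $\mu$ restricted to the free coordinates — and crucially this restricted-to-free-coordinates distribution is the \emph{same} for $A^{(i)}$ and for $A^{(i)}\oplus s$, since flipping the fixed bits does not touch the free coordinates of a product distribution. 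Hence, pairing each $y\in A^{(i)}$ with $y\oplus s\in A^{(i)}\oplus s$ under this measure-preserving bijection on the free part, and using that $y\in E^{(i)}\iff y\oplus s\in E^{(i)}\oplus s$, we get
\begin{equation*}
\Pr_{x\sim\mu}\bigl[f(x)\neq b^{(i)}\,\big|\,x\in A^{(i)}\oplus s\bigr]=\Pr_{x\sim\mu}\bigl[f(x)\neq b^{(i)}\,\big|\,x\in A^{(i)}\bigr].
\end{equation*}

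But the right-hand side is at most $\epsilon$: the algorithm in step~\ref{pick} picks $A^{(i)}$ to be a shortest $\epsilon$-error certificate under $\eta^{(i)}=\mu\mid_{x_{\mathcal L^{(i)}}}$, and $b^{(i)}$ is the value it approximately certifies, so by Definition~\ref{corruption}, $\Pr_{x\sim\eta^{(i)}}[f(x)\neq b^{(i)}\mid x\in A^{(i)}]\leq\epsilon$. I would note that conditioning $\mu$ further on $x\in A^{(i)}$ gives the same distribution whether we start from $\mu$ or from $\eta^{(i)}$, provided the already-queried coordinates $\mathcal L^{(i)}$ are consistent with $A^{(i)}$ — and on the event $T^{(i)}$ (and $x\in A^{(i)}\oplus s$, which we are conditioning on) this consistency holds because the queried values of $x$ are precisely what $\eta^{(i)}$ was conditioned on. This chain of equalities and the one inequality together give the claim.

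The main obstacle I expect is bookkeeping: being careful that the shift $s$ acts only on $Q_{A^{(i)}}$, that $Q_{A^{(i)}}$ and $\mathcal L^{(i)}$ interact correctly (the earlier-queried variables $\mathcal L^{(i)}$ are disjoint from the newly queried ones, and $\eta^{(i)}$ already has the $\mathcal L^{(i)}$-coordinates pinned down), and — most importantly — that the product structure of $\mu$ is exactly what makes the free-coordinate marginal insensitive to flipping the fixed coordinates by $s$. If $\mu$ were not a product distribution this equality would fail, so I would flag that this is the one place the product hypothesis is essential. Everything else is a routine manipulation of conditional probabilities.
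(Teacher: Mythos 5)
Your overall mechanism (a measure-preserving bijection between $A^{(i)}$ and $A^{(i)}\oplus s$ acting only on the $Q_{A^{(i)}}$-coordinates, with the product structure of $\mu$ guaranteeing that the free coordinates have the same conditional law on both subcubes) is exactly the right one, and it is the paper's. But there is a genuine error in how you set it up: you assert that $E^{(i)}\oplus s$ is ``the set of inputs in $A^{(i)}\oplus s$ whose $f$-value disagrees with $b^{(i)}$,'' and accordingly rewrite the claim as $\Pr[f(x)\neq b^{(i)}\mid x\in A^{(i)}\oplus s]\le\epsilon$. That is not what $E^{(i)}\oplus s$ means, and the rewritten statement is false in general. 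By the paper's definition of the shift, $y\in E^{(i)}\oplus s$ says that the \emph{partner} point $z$ (equal to $y$ outside $Q_{A^{(i)}}$ and to $A^{(i)}$ on $Q_{A^{(i)}}$) satisfies $f(z)\neq b^{(i)}$; it says nothing about $f(y)$. Shifting a certificate can destroy it completely: for $f=\textsc{And}$ and $A$ the all-ones assignment (a $0$-error $1$-certificate), every point of $A\oplus s$ with $s\neq 0$ has $f$-value $0$, so $\Pr[f(x)\neq 1\mid x\in A\oplus s]=1$ while $\Pr[f(x)\neq 1\mid x\in A]=0$. Hence your displayed equality $\Pr[f(x)\neq b^{(i)}\mid x\in A^{(i)}\oplus s]=\Pr[f(x)\neq b^{(i)}\mid x\in A^{(i)}]$ cannot hold, product distribution or not: $f$ restricted to the shifted subcube is a different function of the free coordinates. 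The misreading would also undermine the later use of the claim, since in Claim~\ref{sensitive_block} the whole point of $x\notin E^{(i)}\oplus s$ is to conclude $f(x^{S^{(i)}})=b^{(i)}$ about the partner point, not about $x$ itself.

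The repair is short and is what the paper does: condition on $T^{(i)}$ and on $x_{\mathcal L^{(i)}}$, so that $A^{(i)},b^{(i)},E^{(i)}$ are fixed and $x$ is distributed as $\eta^{(i)}=\mu\mid_{x_{\mathcal L^{(i)}}}$. Given $x\in A^{(i)}\oplus s$, the event $x\in E^{(i)}\oplus s$ is determined by the coordinates of $x$ outside $Q_{A^{(i)}}$ (it asks whether the partner point lies in $E^{(i)}$), and since $\eta^{(i)}$ is a product distribution these coordinates have the same conditional distribution whether one conditions on $x\in A^{(i)}\oplus s$ or on $x\in A^{(i)}$. Therefore $\Pr[x\in E^{(i)}\oplus s\mid T^{(i)},x\in A^{(i)}\oplus s]=\Pr[x\in E^{(i)}\mid T^{(i)},x\in A^{(i)}]\le\epsilon$, the last inequality because $A^{(i)}$ is an $\epsilon$-error $b^{(i)}$-certificate under $\eta^{(i)}$. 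Your bijection-on-the-free-part observation is precisely this step; you just have to apply it to the event ``the partner point lies in $E^{(i)}$'' rather than to the event ``$f(x)\neq b^{(i)}$.''
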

\begin{proof}Condition on the events $T^{(i)},  x \in A^{(i)}\oplus s$. Furthermore, condition on $x_{\mathcal L^{(i)}}$. Notice that under this conditioning, the distribution of the input $x$ is $\eta^{(i)}=\mu \mid _{x_{\mathcal L^{(i)}}}$.

If $T^{(i)}$ occurs, $A^{(i)}$ is an $\epsilon$-error $b^{(i)}$-certificate under $\eta^{(i)}$. So $\Pr_{x \sim \eta^{(i)}}[x \in E^{(i)} \mid T^{(i)}, x \in A^{(i)}] \leq \epsilon$. Since $\eta^{(i)}$ is a product distribution as observed before, we have that for each $s \in \{0,1\}^{Q_{A^{(i)}}}, \Pr_{x \sim \eta^{(i)}}[x \in E^{(i)}\oplus s \mid T^{(i)}, x \in A^{(i)}\oplus s]=\Pr_{x \sim \eta^{(i)}}[x \in E^{(i)} \mid T^{(i)}, x \in A^{(i)}]\leq \epsilon$. The claim follows.
\end{proof}
In particular, Claim~\ref{noshift} implies that for all $i=1, \ldots, 4\bs(f)$,
\begin{equation}
\Pr[X^{(i)}=1] \leq \epsilon. \label{one}
\end{equation}
Since $\calB$ runs for at most $4\bs(f)-1 < 4\bs(f)$ steps, by Equation (\ref{one}), linearity of expectation and Markov's inequality we have that
\begin{align}
\Pr[|\{i \mid X^{(i)}=1\}| \geq \bs(f)] \leq 4\epsilon. \label{two}
\end{align}
For $i$ such that $T^{(i)}$ occurs, define $S^{(i)}:=\{j \in Q_{A^{(i)}} \mid x_j \neq A^{(i)}(j)\}$. The following claim will play a central role in our analysis.
\begin{claim}
\label{sensitive_block}
Let $i_1 < i_2$. For each $i \in \{i_1, i_2\}$, let $T^{(i)}$ happen and $X^{(i)}=0$. Then $f(x^{S^{(i)}})=b^{(i)}, S^{(i_1)} \cap S^{(i_2)}=\varnothing$. In particular, if $b^{(i_1)}=b^{(i_2)}$ and $f(x)=1-b^{(i_1)}$ then $S^{(i_1)}$ and $S^{(i_2)}$ are disjoint sensitive blocks for $x$. 
\end{claim}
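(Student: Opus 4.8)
The plan is to establish the three assertions in order: first $f(x^{S^{(i)}}) = b^{(i)}$ for each $i \in \{i_1, i_2\}$, then $S^{(i_1)} \cap S^{(i_2)} = \varnothing$, and finally to combine them under the extra hypotheses to read off two disjoint sensitive blocks.

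For the first assertion, fix $i \in \{i_1, i_2\}$ (so $T^{(i)}$ holds and $X^{(i)} = 0$) and let $s^* \in \{0,1\}^{Q_{A^{(i)}}}$ be the shift with $s^*_j = 1$ exactly when $j \in S^{(i)}$, i.e.\ exactly when $x_j \neq A^{(i)}(j)$; then $A^{(i)}(j) \oplus s^*_j = x_j$ for every $j \in Q_{A^{(i)}}$, so $x \in A^{(i)} \oplus s^*$. The key point is that the two memberships $x^{S^{(i)}} \in E^{(i)}$ and $x \in E^{(i)} \oplus s^*$ are equivalent: the string $x^{S^{(i)}}$ agrees with $A^{(i)}$ on all of $Q_{A^{(i)}}$ — flipping $S^{(i)}$ turns the disagreeing coordinates of $Q_{A^{(i)}}$ into agreeing ones and leaves the remaining coordinates of $Q_{A^{(i)}}$ untouched — and agrees with $x$ off $Q_{A^{(i)}}$, while every element of $E^{(i)} \subseteq A^{(i)}$ also agrees with $A^{(i)}$ on $Q_{A^{(i)}}$; hence, by the definition of $U \oplus s$, both memberships amount to the single condition that $x$ agrees outside $Q_{A^{(i)}}$ with some $z \in E^{(i)}$. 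Since $X^{(i)} = 0$ rules out $x \in E^{(i)} \oplus s^*$, this forces $x^{S^{(i)}} \notin E^{(i)}$; as $x^{S^{(i)}} \in A^{(i)}$ and $E^{(i)}$ is by definition the set of inputs of $A^{(i)}$ on which $f$ differs from $b^{(i)}$, we conclude $f(x^{S^{(i)}}) = b^{(i)}$.

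For disjointness, I would use that the $i_1$-th iteration of $\calB$ queries every variable of $Q_{A^{(i_1)}}$, whence $S^{(i_1)} \subseteq Q_{A^{(i_1)}} \subseteq \mathcal L^{(i_1+1)} \subseteq \mathcal L^{(i_2)}$, the last inclusion holding because $i_1 + 1 \le i_2$. On the other hand $\eta^{(i_2)} = \mu\mid_{x_{\mathcal L^{(i_2)}}}$ is supported only on inputs that agree with $x$ on $\mathcal L^{(i_2)}$, so if $A^{(i_2)}$ fixed some $j \in \mathcal L^{(i_2)}$ to a value other than $x_j$ we would have $\Pr_{y\sim\eta^{(i_2)}}[y\in A^{(i_2)}] = 0$, which is incompatible with $A^{(i_2)}$ being a shortest $\epsilon$-error certificate under $\eta^{(i_2)}$ (an $\epsilon$-error certificate has positive probability under $\eta^{(i_2)}$, and such certificates always exist, so the shortest one is not of this degenerate form). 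Hence $A^{(i_2)}$ agrees with $x$ on every coordinate of $Q_{A^{(i_2)}} \cap \mathcal L^{(i_2)}$, so $S^{(i_2)} \cap \mathcal L^{(i_2)} = \varnothing$; combined with $S^{(i_1)} \subseteq \mathcal L^{(i_2)}$ this yields $S^{(i_1)} \cap S^{(i_2)} = \varnothing$.

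Finally, suppose $b^{(i_1)} = b^{(i_2)} = b$ and $f(x) = 1-b$. The first part gives $f(x^{S^{(i)}}) = b \neq f(x)$ for each $i \in \{i_1, i_2\}$, and each $S^{(i)}$ is nonempty: if $S^{(i)}$ were empty then $x \in A^{(i)}$, and since $X^{(i)} = 0$ forbids $x \in E^{(i)}$ we would get $f(x) = b^{(i)} = b$, a contradiction. Thus $S^{(i_1)}$ and $S^{(i_2)}$ are nonempty, disjoint, and sensitive for $x$, as claimed. I expect the main obstacle to be the bookkeeping around the shift operation — verifying from the formal definition of $U \oplus s$ that the memberships $x^{S^{(i)}} \in E^{(i)}$ and $x \in E^{(i)} \oplus s^*$ are literally the same event — together with the minor technical point that a shortest $\epsilon$-error certificate under a partially frozen distribution never re-fixes an already-queried coordinate.
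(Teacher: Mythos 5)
Your proof is correct and follows essentially the same route as the paper: the shift $s^*$ matching $x$'s disagreement pattern turns $X^{(i)}=0$ into $x^{S^{(i)}}\notin E^{(i)}$, and disjointness comes from $\eta^{(i_2)}$ being supported on inputs agreeing with $x$ on the coordinates queried by iteration $i_1$. You merely spell out two points the paper leaves implicit (that the certificate chosen at step $i_2$ cannot fix an already-queried coordinate to a value disagreeing with $x$, and that the blocks are nonempty), both handled correctly.
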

\begin{proof}
Clearly, $x^{S^{(i)}} \in A^{(i)}$. Also, since $X^{(i)}=0, x \notin E^{(i)}+s$ for any $s$. Thus $x^{S^{(i)}} \notin E^{(i)}$. Hence $f(x^{S^{(i)}})=b^{(i)}$. To see that $S^{(i_1)} \cap S^{(i_2)}=\varnothing$, let $j \in S^{(i_1)}$.  It is easy to see that $i_2>i_1$ implies that the distribution $\eta^{(i_2)}$ at step $i_2$ is supported only on inputs consistent with $x_{Q_{A^{(i_1)}}}$. Hence, if $j \in Q_{A^{(i_2)}}$, then $x_j=A^{(i_2)}(j)$ which implies that $j \notin S^{(i_2)}$.
\end{proof}
For the rest of the proof, condition on the event that $\calB$ terminates at iteration $i$. We will bound the probability that $\calB$ errs.

First, condition on the event that $\calB$ terminates in step~\ref{ret2}. Then the probability that it errs is $\Pr[x \in E^{(i)} \mid T^{(i)}, x \in A^{(i)}] \leq \epsilon$ (by Claim~\ref{noshift} invoked with $s=0^{Q_{A^{(i)}}}$).

Next, condition on the event that $\calB$ terminates at step~\ref{ret3}, and $t_0 = 2\bs(f)$ (the case $t_1 = 2\bs(f)$ is symmetrical). By Equation~(\ref{two}), $|\{i \mid X^{(i)}=1\}| \geq \bs(f)$ with probability at most $4\epsilon$. Condition on $|\{i \mid X^{(i)}=1\}| < \bs(f)$. Then $\calB$ outputs $0$. We claim that $f(x)=0$ with probability $1$. Towards a contradiction, assume that $f(x)=1$. As $t_0 = 2\bs(f)$ and $|\{i \mid X^{(i)}=1\}| < \bs(f)$, then in at least $2\bs(f)-(\bs(f)-1)=\bs(f)+1$ iterations $j \leq i$, $b^{(j)}=0$ and $X^{(j)}=0$. By Claim~\ref{sensitive_block}, the blocks $S^{(j)}$ for those $j$ iterations are sensitive for $x$ and are disjoint. Since any input can have at most $\bs(f)$ sensitive blocks, we have the desired contradiction.

Thus the probability that $\calB$ errs is at most $\max\{\epsilon, 4\epsilon\}=4\epsilon.$ \qedhere
\end{proof}
Now we prove Theorem~\ref{prtjk}. Below we reproduce the definition of the partition bound by Jain and Klauck \cite{Jain_2010}. Here $\epsilon$ is an error parameter between 0 and 1, $A$ stands for subcubes, or equivalently, partial assignments, $z$ stands for a bit, i.e., a $0$ or a $1$, and $x$ stands for an input to $f$ from $\{0,1\}^n$.

\begin{definition}[Partition Bound] The $\epsilon$-\emph{partition bound} bound of $f$, denoted $\prt_{\epsilon}(f)$, is given by the logarithm of the optimal value of the following linear program\footnote{Jain and Klauck in their paper defined $\prt_\epsilon(f)$ to be the value of the linear program, instead of the logarithm of the value of the program.}:
 \begin{align*}
\text{minimize} \sum_{z,A}w_{z,A}\cdot2^{|A|} \hspace{1cm} \text{subject to} \quad & \forall x : \sum_{A \ni x} w_{f(x),A} \geq 1-\epsilon,\\
& \forall x : \sum_{z,A \ni x} w_{z,A} = 1,\\
&\forall z,A: w_{z,A} \geq 0.
 \end{align*}
\end{definition}
Jain and Klauck showed that the partition bound bounds randomized query-complexity from below. They also showed that randomized query complexity is bounded above by the third power of the partition bound.
\begin{theorem}[\cite{Jain_2010}, Theorem 3]
\label{prt_ub:JK} ~
\begin{enumerate}
\item \label{ptr_lower} $\R_\epsilon(f) \geq \frac{1}{2} \prt_\epsilon (f)$.
\item \label{ptr_cube} $\R_{1/3}(f) \leq \D(f)=O(\prt_{1/3}(f)^3) $.
\end{enumerate}
\end{theorem}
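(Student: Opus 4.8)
\noindent\emph{Proof plan.} For the first inequality $\R_\epsilon(f)\ge\tfrac{1}{2}\prt_\epsilon(f)$, the plan is to convert an arbitrary randomized query algorithm into a feasible point of the partition linear program. Write $R:=\R_\epsilon(f)$ and view an optimal algorithm $\clA$ as a distribution $\{p_j\}$ over deterministic decision trees of depth at most $R$; appending dummy queries at the leaves, we may assume each tree is a complete binary tree of depth exactly $R$ (this changes neither the computed function nor the worst-case cost), so the $2^R$ leaves of each tree partition $\{0,1\}^n$ into labelled subcubes of codimension $R$. Setting $w_{z,A}$ to be the probability that $A$ occurs as a leaf labelled $z$, the constraint $\sum_{z,A\ni x}w_{z,A}=1$ holds because the leaves of each tree partition the cube, the constraint $\sum_{A\ni x}w_{f(x),A}\ge 1-\epsilon$ is exactly the success probability of $\clA$ on $x$, and the objective evaluates to $\sum_j p_j\cdot 2^R\cdot 2^R=4^R$. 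Hence the optimal LP value is at most $4^R$, i.e.\ $\prt_\epsilon(f)\le 2R=2\R_\epsilon(f)$.

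For the second part, $\R_{1/3}(f)\le\D(f)$ is immediate since a deterministic decision tree is a zero-error (hence $\tfrac{1}{3}$-error) algorithm, so it remains to prove $\D(f)=O(\prt_{1/3}(f)^3)$. I would obtain this from the bound $\bs(f)=O(\prt_{1/3}(f))$ together with the standard relation $\D(f)=O(\C(f)\cdot\bs(f))=O(\bs(f)^3)$ (here $\C(f)\le\s(f)\bs(f)\le\bs(f)^2$; see e.g.\ \cite{Nisan_1989, Buhrman_deWolf_2002}). To bound block sensitivity, fix an input $x$ with disjoint minimal sensitive blocks $B_1,\dots,B_k$; by pigeonhole, a sub-collection of size $k'\ge k/2$ has all of the neighbours $x^{B_i}$ attaining the common value $1-f(x)$, and after relabelling assume $f(x)=0$ and $f(x^{B_i})=1$ for $i\in[k']$. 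Let $\{w_{z,A}\}$ be any feasible solution of the $\prt_{1/3}(f)$ program, and set $N_A:=|\{i\in[k']:x^{B_i}\in A\}|$. The correctness constraint at $x$ gives $\sum_{A\ni x}w_{0,A}\ge\tfrac{2}{3}$; summing the correctness constraints at the $x^{B_i}$ gives $\sum_A w_{1,A}N_A\ge\tfrac{2k'}{3}$; summing the equality constraints there gives $\sum_A (w_{0,A}+w_{1,A})N_A=k'$; and subtracting the last two gives $\sum_A w_{0,A}N_A\le\tfrac{k'}{3}$.

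The crux is the combinatorial observation that, since the blocks are disjoint, any subcube $A$ with $x\in A$ contains $x^{B_i}$ whenever $A$ fixes no coordinate of $B_i$, hence for at least $k'-|A|$ of the indices $i\in[k']$; that is, $N_A\ge k'-|A|$. Feeding this into $\sum_{A\ni x}w_{0,A}N_A\le k'/3$ and using $\sum_{A\ni x}w_{0,A}\ge 2/3$ forces $\sum_{A\ni x}w_{0,A}|A|\ge k'/3$. Now $W:=\sum_{A\ni x}w_{0,A}$ lies in $[2/3,1]$, the upper bound coming from the partition constraint $\sum_{z,A\ni x}w_{z,A}=1$, so by convexity of $t\mapsto 2^t$ the objective is at least $\sum_{A\ni x}w_{0,A}2^{|A|}\ge W\cdot 2^{(1/W)\sum_{A\ni x}w_{0,A}|A|}\ge\tfrac{2}{3}\cdot 2^{k'/3}$; hence $2^{\prt_{1/3}(f)}\ge\tfrac{2}{3}\cdot 2^{k'/3}$ and $\bs(f)=k=O(\prt_{1/3}(f))$. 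The first two paragraphs are routine; the step I expect to be most delicate is this final chain, where the partition constraint must be invoked precisely so that the average codimension of the $0$-labelled subcubes through $x$ is not diluted by surplus weight, and therefore translates into the exponent $\Theta(\bs(f))$ rather than something weaker.
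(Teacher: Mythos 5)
This theorem is imported from Jain and Klauck \cite{Jain_2010}; the paper states it without proof, so your argument can only be compared with the cited source, and it is correct and essentially the standard one. For part 1 you turn the leaf distribution of a worst-case depth-$R$ randomized algorithm (with trees padded to complete depth $R$) into a feasible point of the partition LP with objective value $2^R\cdot 2^R=4^R$, which is exactly the Jain--Klauck argument and gives $\prt_\epsilon(f)\le 2\R_\epsilon(f)$ under this paper's logarithmic convention. For part 2 you prove $\bs(f)=O(\prt_{1/3}(f))$ directly from the LP constraints and combine it with $\D(f)\le \C(f)\,\bs(f)\le \s(f)\,\bs(f)^2\le \bs(f)^3$; this parallels the route the paper itself uses elsewhere via Theorem~\ref{fbs&prt}, and your counting step $N_A\ge k'-|A|$ (a codimension-$|A|$ subcube can fix coordinates of at most $|A|$ of the disjoint blocks) together with the Jensen step, where $W\in[2/3,1]$ is guaranteed by the correctness and partition constraints at $x$, is sound. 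Two minor remarks: the pigeonhole step is vacuous, since $f$ is Boolean every sensitive block already satisfies $f(x^{B_i})=1-f(x)$, so $k'=k$; and the conclusion $k/3\le \prt_{1/3}(f)+\log(3/2)$ carries an additive constant that is absorbed in the $O(\cdot)$ in the usual way. Neither affects correctness.
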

The best known separation between $\D(f)$ and $\prt(f)$ is quadratic \cite{Ambainis_2016}.
Theorem~\ref{prtjk} proves that this is tight for product distributions. As stated in Section~\ref{introduction}, Theorem~\ref{prtjk} improves upon the result of Jain et al. by a polylogarithmic factor.

Jain and Klauck showed that the partition bound is bounded below by the block sensitivity.
\begin{theorem}[\cite{Jain_2010}, Theorem 3]
\label{fbs&prt}
For any error parameter $\epsilon \in [0,1/2)$,
\[\prt_{\epsilon/4}(f) \geq \epsilon \cdot \bs(f) + \log \epsilon - 2.\]
\end{theorem}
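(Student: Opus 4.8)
\emph{Proof proposal.} The plan is to argue directly about the linear program defining $\prt_{\epsilon/4}(f)$, rather than passing through any intermediate complexity measure. Write $\delta = \epsilon/4$ for the error parameter. Let $\{w_{z,A}\}$ be an arbitrary feasible solution of that program; I will show that its objective value $V := \sum_{z,A} w_{z,A}\,2^{|A|}$ is at least $(\tfrac12-\delta)\,2^{(1-2\delta)\bs(f)}$. Since this holds for every feasible solution, the optimum is at least that large, which gives $\prt_{\epsilon/4}(f) \ge (1-\tfrac{\epsilon}{2})\,\bs(f) + \log\bigl(\tfrac12 - \tfrac{\epsilon}{4}\bigr)$; for $\epsilon \in [0,\tfrac12)$ this is at least (and in fact a bit stronger than) the claimed $\epsilon\cdot\bs(f) + \log\epsilon - 2$.

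First I would fix an input $x^*$ attaining $\bs(f,x^*) = \bs(f) =: k$, together with pairwise disjoint sensitive blocks $B_1,\dots,B_k \subseteq [n]$, and set $b := f(x^*)$; then $f\bigl((x^*)^{B_i}\bigr) = 1-b$ for every $i \in [k]$. The combinatorial core is an observation about subcubes $A$ with $x^* \in A$: letting $D_A := \{\, i \in [k] : Q_A \cap B_i \neq \varnothing \,\}$ denote the set of blocks that $A$ ``touches'', the whole-block flip $(x^*)^{B_i}$ lies in $A$ if and only if $i \notin D_A$ (this uses that $x^* \in A$ forces every coordinate fixed by $A$ to agree with $x^*$, and that the $B_i$ are disjoint). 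Hence exactly $k - |D_A|$ of the points $(x^*)^{B_1},\dots,(x^*)^{B_k}$ lie in $A$, and moreover $|D_A| \le |Q_A| = |A|$, since $A$ fixes at least one coordinate inside each touched block and the blocks are disjoint.

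Next I would read off two consequences of feasibility. The covering constraint at $x^*$ gives $\sum_{A \ni x^*} w_{b,A} \ge 1-\delta$. And for each $i$, the covering constraint at $(x^*)^{B_i}$ (whose $f$-value is $1-b$) together with the partition equality $\sum_{z,\, A \ni (x^*)^{B_i}} w_{z,A} = 1$ gives $\sum_{A \ni (x^*)^{B_i}} w_{b,A} \le \delta$. Summing the latter over $i \in [k]$, exchanging the order of summation, and discarding the subcubes that do not contain $x^*$ yields $\sum_{A \ni x^*} w_{b,A}\,(k - |D_A|) \le k\delta$. Restricting to subcubes with $|A| \le k(1-2\delta)$, for which $k - |D_A| \ge k - |A| \ge 2k\delta$, this forces $\sum_{A \ni x^*,\ |A| \le k(1-2\delta)} w_{b,A} \le \tfrac12$; subtracting this from $\sum_{A \ni x^*} w_{b,A} \ge 1-\delta$ leaves at least $\tfrac12 - \delta$ units of $w_{b,\cdot}$-mass on subcubes $A$ with $x^* \in A$ and $|A| > k(1-2\delta)$. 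Each such subcube contributes at least $w_{b,A}\,2^{k(1-2\delta)}$ to $V$, so $V \ge (\tfrac12-\delta)\,2^{k(1-2\delta)}$; taking logarithms and substituting $\delta = \epsilon/4$ finishes the argument (the case $\delta = 0$ is a trivial degeneration of the same computation).

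The one step that needs genuine care is the combinatorial observation: one must verify that, because the $B_i$ are pairwise disjoint and every coordinate fixed by $A$ agrees with $x^*$, the flip $(x^*)^{B_i}$ is consistent with $A$ precisely when $A$ fixes no coordinate inside $B_i$, so that the count ``number of whole-block flips landing in $A$'' and the codimension bound $|D_A| \le |A|$ line up. The remaining points — the harmless subcubes with $|A| > k$ (whose terms can simply be dropped), and checking that $|A| > k(1-2\delta)$ with $|A|$ an integer really yields $2^{|A|} \ge 2^{k(1-2\delta)}$ — are routine, as is everything once the two linear-program constraints are in hand.
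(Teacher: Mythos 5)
The paper itself gives no proof of this statement—it is imported verbatim from Jain and Klauck—so there is nothing in-paper to match your argument against line by line; judged on its own, your proof is correct. The combinatorial core checks out: for a subcube $A \ni x^*$, the whole-block flip $(x^*)^{B_i}$ lies in $A$ exactly when $A$ fixes no coordinate of $B_i$, and the number of touched blocks satisfies $|D_A| \le |A|$ because the blocks are disjoint; the two consequences of feasibility are precisely the covering constraint at $x^*$ and the covering-plus-partition constraints at the flips, giving $\sum_{A \ni (x^*)^{B_i}} w_{b,A} \le \delta$; and the averaging step (cubes containing $x^*$ of codimension at most $k(1-2\delta)$ contain at least $2k\delta$ flips, hence carry $w_{b,\cdot}$-mass at most $\tfrac12$) is carried out correctly, so at least $\tfrac12-\delta$ mass sits on cubes of codimension exceeding $k(1-2\delta)$ and the objective is at least $(\tfrac12-\delta)2^{k(1-2\delta)}$. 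The divisions require $k\delta>0$, but the excluded cases ($\epsilon=0$, where $\log\epsilon=-\infty$, and $\bs(f)=0$) are trivial, as you indicate. Your conclusion $\prt_{\epsilon/4}(f)\ge(1-\tfrac{\epsilon}{2})\bs(f)+\log(\tfrac12-\tfrac{\epsilon}{4})$ is indeed at least as strong as the stated bound, since $1-\tfrac{\epsilon}{2}\ge\epsilon$ for $\epsilon\le\tfrac23$ and $\tfrac12-\tfrac{\epsilon}{4}\ge\tfrac{\epsilon}{4}$ for $\epsilon\le 1$. The constants in the cited statement suggest the source derivation is distributional/corruption-flavoured (in the spirit of the argument used for Lemma~\ref{ceps&prt} in the appendix), whereas you argue directly on the primal LP; your route is more elementary and self-contained and yields a numerically better bound, while the cited route has the advantage of factoring through measures (corruption, fractional block sensitivity) that the paper already relates to $\prt$.
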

We show that the minimum query corruption bound lower bounds the partition bound (see Appendix \ref{app1} for the proof). Our proof closely follows the proof that the corruption bound is asymptotically bounded above by square of the partition bound shown in \cite{Jain_2010}.
\begin{lemma}
\label{ceps&prt}
For any error parameter $\epsilon \in [0,1/2)$,
\[\corr^\times_{\min,2\epsilon}(f) \leq \prt_\epsilon(f)\cdot \log(1/\epsilon).\]
\end{lemma}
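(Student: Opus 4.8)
The plan is to fix a product distribution $\mu$ achieving $\corr^\times_{\min,2\epsilon}(f)$ and an optimal solution $\{w_{z,A}\}$ to the $\prt_\epsilon(f)$ linear program, and to extract from the latter, for one of the two values $b\in\{0,1\}$, a single assignment $A$ that is a $2\epsilon$-error $b$-certificate under $\mu$ of small size. Write $v$ for the optimal value of the partition bound LP, so $\log v = \prt_\epsilon(f)$; we want to find such an $A$ with $|A| \le \log v \cdot \log(1/\epsilon)$. First I would observe that, by the first LP constraint and averaging over $x\sim\mu$, the ``correct'' assignments carry most of the weight: $\sum_{z,A}\Pr_{x\sim\mu}[x\in A]\,w_{z,A}\,\mathbbm 1[z=f(x)\text{ on }A]$ is close to $1$, while the total ``mass'' $\sum_{z,A}\Pr_{x\sim\mu}[x\in A]w_{z,A}=1$ by the second LP constraint (the partition/probability constraint). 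So the fraction of mass placed on assignments that are \emph{bad} $b$-certificates (those with $\Pr_{x\sim\mu}[f(x)\ne b\mid x\in A]>2\epsilon$) is small — at most $1/2$ by a Markov argument splitting the error budget $\epsilon$ among the bad assignments — hence a constant fraction of the $\mu$-mass sits on assignments that \emph{are} $2\epsilon$-error certificates for their label.

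Next I would bucket the good certificates by size: let $S_k$ be the set of good $(z,A)$ with $|A|\in[k,k+1)$, i.e.\ essentially $|A|=k$ for integer $k$. Since $\sum_{z,A}w_{z,A}2^{|A|}\le v$, the total weight of assignments of size $\ge \log v + \log(1/\epsilon)$ is at most $v\cdot 2^{-(\log v+\log(1/\epsilon))}=\epsilon$, which is again a small fraction of the good mass. Therefore there is a good certificate $A$ (for some label $b$) with $\Pr_{x\sim\mu}[x\in A]>0$ and $|A|<\log v+\log(1/\epsilon)$; in fact, because the good mass of small certificates is bounded below by a positive constant and spread over sizes $0,1,\dots,\lceil\log v+\log(1/\epsilon)\rceil$, a pigeonhole/averaging step over the $O(\log v + \log(1/\epsilon))$ size buckets yields such an $A$. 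Taking a bit more care with constants (and using $\log(1/\epsilon)\ge 1$) gives $|A|\le \prt_\epsilon(f)\cdot\log(1/\epsilon)$. This $A$ is a $2\epsilon$-error $b$-certificate under $\mu$, so $\corr_{2\epsilon}^{b,\mu}(f)\le |A|$, and since $\corr^\times_{\min,2\epsilon}(f)=\max_\mu\min_b\corr_{2\epsilon}^{b,\mu}(f)\le\max_\mu\corr_{2\epsilon}^{b,\mu}(f)$ for whichever $b$ we obtained, we are done for this $\mu$; maximizing over product $\mu$ finishes the proof.

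The main obstacle is the bookkeeping of constants in the two Markov steps: one must split the failure budget between ``$A$ certifies the wrong value with probability $>2\epsilon$'' and ``$|A|$ too large'' so that a strictly positive fraction of $\mu$-mass survives both filters, and then convert ``positive surviving mass on sizes up to $\log v+\log(1/\epsilon)$'' into an actual individual assignment of that size — this is where a clean averaging over size buckets is needed rather than a naive union bound, and where the $\log(1/\epsilon)$ factor (as opposed to an additive $\log(1/\epsilon)$) really comes from, mirroring the corruption-vs-partition-bound argument of Jain and Klauck~\cite{Jain_2010}. Everything else is a direct manipulation of the LP constraints.
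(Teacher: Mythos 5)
Your proposal is correct, and its second half takes a genuinely different route from the paper's. Both proofs share the same first step: since $\sum_{z,A}w_{z,A}2^{|A|}=2^{\prt_\epsilon(f)}$, a Markov argument shows that the total weight on assignments larger than the threshold is at most $\epsilon$. From there the paper fixes $b$ to be the $\mu$-majority value of $f$, multiplies the per-input LP constraints by $\mu(x)$, sums separately over $f^{-1}(1)$ and $f^{-1}(0)$, and compares the two resulting sums (a ratio-averaging over the label-$1$ assignments of small size) to extract one small subcube that is a $2\epsilon$-error certificate \emph{for the majority label}; the $\min_b$ in the definition is then used only via $\corr^\times_{\min,2\epsilon}(f)\le\corr^{1,\mu}_{2\epsilon}(f)$. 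You instead run a single global Markov argument over all pairs $(z,A)$ weighted by the mass $w_{z,A}\Pr_{x\sim\mu}[x\in A]$: the error constraint forces the mass on pairs with conditional error exceeding $2\epsilon$ to be below $1/2$, the size bound removes at most $\epsilon$ more, and any surviving pair of positive mass yields a small $2\epsilon$-error certificate for \emph{some} label, which is precisely what the $\min_b$ tolerates. This avoids the paper's WLOG and the ratio manipulation, at the cost of not controlling which label gets certified (the paper's variant additionally shows that the majority label always admits a small approximate certificate, and its argument works for arbitrary, not just product, distributions --- as does yours). Two minor remarks: the pigeonhole over size buckets at the end is unnecessary, since positive surviving mass already produces an individual assignment; and your threshold naturally gives the additive bound $\prt_\epsilon(f)+\log(1/\epsilon)$, whose conversion to the stated product form requires $(\prt_\epsilon(f)-1)(\log(1/\epsilon)-1)\ge 1$ --- but this is exactly the same edge-case slack implicit in the paper's own Markov step, so it is not a gap relative to the paper's proof.
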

Theorem~\ref{prtjk} now follows, combining Theorems~\ref{prt_ub:this1},~\ref{fbs&prt} and Lemma \ref{ceps&prt} together.

We conclude by showing that the query corruption bound is a quadratic upper bound on the distributional query complexity.
\begin{theorem} \label{D&corr}Let $\epsilon \in [0,1/2)$ and $\mu$ a product distribution over the inputs. Then
\[\D_{4\epsilon}^\mu(f) = O\left(\corr_\epsilon(f)^2\right).\]
\end{theorem}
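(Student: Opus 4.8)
The plan is to derive Theorem~\ref{D&corr} as a corollary of Theorem~\ref{prt_ub:this1} by observing that the query corruption bound $\corr_\epsilon(f)$ dominates the minimum product query corruption bound $\corr^\times_{\min,\epsilon}(f)$, and that $\corr_\epsilon(f)$ also dominates $\bs(f)$ up to constants. The first point is essentially definitional: $\corr^\times_{\min,\epsilon}(f) = \max_{\mu\text{ product}} \min_b \corr_\epsilon^{b,\mu}(f) \le \max_{\mu} \max_b \corr_\epsilon^{b,\mu}(f) = \corr_\epsilon(f)$, since restricting to product distributions and taking a minimum over $b$ only decreases the quantity. So it suffices to show $\bs(f) = O(\corr_\epsilon(f))$ for every fixed $\epsilon \in [0,1/2)$; then Theorem~\ref{prt_ub:this1} gives $\D_{4\epsilon}^\mu(f) = O(\corr^\times_{\min,\epsilon}(f)\cdot\bs(f)) = O(\corr_\epsilon(f)^2)$, which is exactly what we want.

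The remaining step, $\bs(f) = O(\corr_\epsilon(f))$, is the only place where real content enters, and I expect it to be the main (though still routine) obstacle. I would argue as follows. Let $x$ be an input achieving $\bs(f) = \bs(f,x) =: k$, with disjoint sensitive blocks $B_1,\dots,B_k$ and, say, $f(x) = b$ and $f(x^{B_j}) = 1-b$ for every $j$. Consider the product distribution $\mu$ that picks each $B_j$ independently and flips $x$ on the chosen block with probability $1/2$: concretely, put $\mu_i$ uniform on $\{x_i, x_i^{B_j}\text{-coordinate}\}$ for $i$ in some block, and $\mu_i$ a point mass on $x_i$ otherwise — more cleanly, let each coordinate $i \in B_j$ be flipped from $x_i$ independently with probability $p$, so that the whole block $B_j$ agrees with $x^{B_j}$ with probability $p^{|B_j|}$; choosing $p$ appropriately (or, alternatively, using the standard trick of making $\mu$ supported on the $k+1$ points $x, x^{B_1},\dots,x^{B_k}$ — note this latter distribution is not product, so one must be slightly careful and instead use a genuinely product construction, e.g. each block independently "on" with a small probability). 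Under such a $\mu$, any $b$-certificate $A$ that has small error must fix enough variables to rule out almost all of the $x^{B_j}$'s, hence must intersect all but a constant fraction of the blocks $B_j$; since the blocks are disjoint, $|A| = \Omega(k)$. The same reasoning applies to $(1-b)$-certificates, so $\corr_\epsilon(f) \ge \corr_\epsilon^{b,\mu}(f) = \Omega(\bs(f))$.

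A cleaner way to finish this sub-step, avoiding distribution engineering, is to invoke results already quoted in the paper: Theorem~\ref{fbs&prt} gives $\prt_{\epsilon/4}(f) \ge \epsilon\cdot\bs(f) + \log\epsilon - 2$, and Lemma~\ref{ceps&prt} gives $\corr^\times_{\min,2\epsilon}(f) \le \prt_\epsilon(f)\cdot\log(1/\epsilon)$; but these only bound $\bs(f)$ by $\prt_\epsilon(f)$, not directly by $\corr_\epsilon(f)$, so one still needs $\corr_\epsilon(f) = \Omega(\bs(f))$ independently. I would therefore state and prove the short fact $\bs(f) = O(\corr_\epsilon(f))$ (with the constant depending on $\epsilon$) via the product-distribution argument sketched above, and then combine it with the definitional inequality $\corr^\times_{\min,\epsilon}(f) \le \corr_\epsilon(f)$ and Theorem~\ref{prt_ub:this1} to conclude. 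The whole proof should be three or four lines once the fact $\bs(f) = O(\corr_\epsilon(f))$ is in hand; the only subtlety to be careful about is ensuring the witnessing distribution in that fact is genuinely a product distribution when we want to feed it into the "product distribution" hypotheses, though since $\corr_\epsilon(f)$ ranges over \emph{all} distributions this is not even needed here — we only need it to be \emph{some} distribution, so the simple $(k+1)$-point distribution suffices.
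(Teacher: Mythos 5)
Your proposal is correct and takes essentially the same route as the paper: the paper likewise combines Theorem~\ref{prt_ub:this1} with the definitional inequality $\corr^\times_{\min,\epsilon}(f)\le\corr_\epsilon(f)$ and a lower bound on corruption by block sensitivity, proved via a hard distribution supported on $x$ and its block-flips $x^{B_i}$ (its Lemma~\ref{fbs-corr} gives the slightly stronger $\bs(f)\le\fbs(f)\le\corr_\epsilon(f)$, weighting the flips by the $\fbs$ LP solution rather than uniformly, but this changes nothing in substance). One minor inaccuracy in an unneeded aside: under your distribution the $(1-b)$-certificate side need not be $\Omega(\bs(f))$ (fixing one block already yields a low-error $(1-b)$-certificate), but since $\corr_\epsilon(f)$ is a maximum over $b$, the $b=f(x)$ side alone suffices, which is what you actually use.
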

The result follows by combining Theorem \ref{prt_ub:this1} with the following lemma (see Appendix \ref{app2} for the proof).
\begin{lemma} \label{fbs-corr} For any $\epsilon \in [0, 1)$,
$\fbs(f) \leq \corr_{\epsilon}(f).$
\end{lemma}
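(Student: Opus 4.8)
The plan is to exhibit one distribution $\mu$ on $\{0,1\}^n$ and one bit $b$ for which every $\epsilon$-error $b$-certificate under $\mu$ has length at least $\fbs(f)$; since $\corr_\epsilon(f)=\max_{\mu}\max_{b}\corr_\epsilon^{b,\mu}(f)$ and $\fbs(f)=\FC(f)$, this gives the claim. (We may assume $f$ is non-constant, so that $\fbs(f)\ge 1$.) Concretely, I would fix an input $a$ with $\fbs(f,a)=\fbs(f)$, set $b:=f(a)$, take an optimal solution $\{u(B)\}$ of the $\fbs(f,a)$ linear program --- supported on finitely many sensitive blocks $B$ of $a$, with $\sum_{B\ni i}u(B)\le 1$ for all $i$ and $\sum_B u(B)=\fbs(f)$ --- and let $\mu$ put mass $\alpha$ on $a$ and mass $\frac{1-\alpha}{\fbs(f)}\,u(B)$ on each shifted input $a^B$ with $u(B)>0$ (these inputs are pairwise distinct and differ from $a$, since the blocks are nonempty, and the masses sum to $1$), where $\alpha\in(0,1)$ is a small parameter to be fixed last.

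The analysis would then go as follows. Let $A$ be any $\epsilon$-error $b$-certificate under $\mu$, i.e.\ $\Pr_{x\sim\mu}[f(x)\ne b\mid x\in A]\le\epsilon$; for this conditional probability to be defined we require $\mu(A)>0$. In the support of $\mu$, only $a$ has $f=b$, while every $a^B$ has $f=1-b$. Hence if $A$ were inconsistent with $a$, all support points consistent with $A$ would be shifts $a^B$, forcing the error to be $1>\epsilon$; so $a\in A$, which means $A(j)=a_j$ for every $j\in Q_A$. Consequently $a^B$ is consistent with $A$ exactly when $B\cap Q_A=\varnothing$, and the error of $A$ equals $\frac{(1-\alpha)\sigma}{\alpha\,\fbs(f)+(1-\alpha)\sigma}$, where $\sigma:=\sum_{B:\,B\cap Q_A=\varnothing}u(B)$. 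The matching constraints give $\sum_{B:\,B\cap Q_A\ne\varnothing}u(B)\le\sum_{j\in Q_A}\sum_{B\ni j}u(B)\le|Q_A|=|A|$, hence $\sigma\ge\fbs(f)-|A|$. If $|A|<\fbs(f)$ --- equivalently $|A|\le\lceil\fbs(f)\rceil-1$, since $\fbs(f)$ may be non-integral --- then $\sigma\ge c$ for the positive constant $c:=\fbs(f)-\lceil\fbs(f)\rceil+1$, and as the error is increasing in $\sigma$ it is at least $\frac{(1-\alpha)c}{\alpha\,\fbs(f)+(1-\alpha)c}$, which tends to $1$ as $\alpha\to 0$; choosing $\alpha$ small enough that this quantity exceeds $\epsilon$ contradicts that $A$ is an $\epsilon$-error certificate. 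Therefore every $\epsilon$-error $b$-certificate under $\mu$ has length at least $\fbs(f)$, and since the full assignment of $a$ is such a certificate, $\corr_\epsilon^{b,\mu}(f)\ge\fbs(f)$, which proves the lemma.

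The step I expect to matter most is the choice of $u$-proportional weights for $\mu$: it is precisely this weighting that turns ``$|A|$ is short'' into ``$\sigma$ is large'' through the fractional matching constraints, and it is essential that $u$ may be genuinely non-integral --- equal weights on $\{a\}\cup\{a^B\}$ would only recover the weaker relation $\bs(f)\le\corr_\epsilon(f)$, of which this argument is a refinement. The rest is bookkeeping: adopting the convention that an $\epsilon$-error certificate $A$ must satisfy $\mu(A)>0$ (so that zero-measure assignments cannot spuriously shorten $\corr_\epsilon$), handling the possibly non-integral value of $\fbs(f)$ in the inequality $|A|<\fbs(f)$, and dismissing the constant case.
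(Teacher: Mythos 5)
Your proposal is correct and follows essentially the same route as the paper: the same hard distribution (mass $\alpha$ on the $\fbs$-maximizing input $a$ and mass proportional to the optimal fractional weights $u(B)$ on the shifted inputs $a^B$), and the same use of the matching constraints to bound $\sum_{B\cap Q_A\neq\varnothing}u(B)\le|A|$. The only difference is the endgame --- you fix a small $\alpha$ and use integrality of $|A|$ to force a contradiction, while the paper lets the mass on $a$ tend to $0$ and takes a limit --- and your explicit handling of the case $a\notin A$ is a point the paper glosses over.
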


\section{Open Problems} \label{open}

\paragraph*{Expectational vs. Fractional Certificate.}
What is the largest separation between the two measures?
Is the upper bound $\EC(f) \leq \FC(f)^{3/2}$ tight?
Any smaller upper bound would improve the $\R(f) \leq \FC(f)^3$ upper bound.
Our attempts in finding a function where $\EC(f)$ is asymptotically larger than $\FC(f)$ so far have been unsuccessful.
As evident by the proof of the quadratic separation between $\EC(f)$ and $\C(f)$, such an example would need to have $\FC^z(f) = o(\C^z(f))$ for both $z \in \{0, 1\}$.
Examples of separations between $\FC(f)$ and $\C(f)$ given in \cite{Aaronson_2008} and \cite{Gilmer_2016} do not satisfy these properties.

\paragraph*{Corruption and Partition Bounds.}
Can the proof of Theorem \ref{prt_ub:this1} be extended to non-product distributions?
The definition of the corruption bound is in some sense a relaxation of the certificate compexity.
Can the argument of $\D(f) \leq \C(f)^2$ be extended to the randomized setting in terms of the corruption bound?

\section*{Acknowledgements.}
This work is supported in part by the Singapore National Research Foundation under NRF RF Award
No. NRF-NRFF2013-13, the Ministry of Education, Singapore under the Research Centres of Excellence programme by the Tier-3 grant Grant \textquotedblleft Random numbers from quantum processes\textquotedblright \ No.\ MOE2012-T3-1-009.

D.G. is partially funded by the grant P202/12/G061 of GA \v CR and by RVO:\ 67985840.
Part of this work was done while D.G. was visiting the Centre for Quantum Technologies at the National University of Singapore.

M.S. is partially funded by the ANR Blanc program under contract ANR-12-BS02-005 (RDAM project).

J.V. is supported by the ERC Advanced Grant MQC.
Part of this work was done while J.V. was an intern at the Centre for Quantum Technologies at the National University of Singapore.

We thank Anurag Anshu for helpful discussions.

\bibliography{bibliography}

\newpage

\appendix

\section{Proof of Lemma \ref{ceps&prt}} \label{app1}

\begin{proof}
Let $c=\prt_\epsilon(f)$. Abusing notation, let $\{w_{z,A}\}_{z,A}$ be a primal feasible point which minimizes the objective. Thus $\sum_{z,A}w_{z,A}\cdot 2^{|A|}=2^c$. We immediately have that,
\[
2^c \geq \sum_{z, A: |A|>c \log (1/\epsilon)}w_{z, A}\cdot 2^{|A|}\geq \frac{2^c}{\epsilon}\sum_{z, A: |A|>c \log (1/\epsilon)}w_{z, A}.
\]
implying,
\begin{align}
\label{bigcube}
\sum_{z,A: |A|>c \log (1/\epsilon)}w_{z,A} \leq \epsilon.
\end{align}
Let $\mu$ be any product probability distribution on $\{0,1\}^n$ (in fact, the proof works for any distribution $\mu$). Without loss of generality, assume that, $\Pr_{x \sim \mu}[f(x)=1] \geq \Pr_{x \sim \mu}[f(x)=0]$. We shall show that $\corr_\epsilon^{1,\mu}(f) = O(c)$. That will prove the theorem.

If $\Pr_{x \sim \mu}[f(x)=0]=0$ then $\{0,1\}^n$ is a $0$-error $1$-certificate of co-dimension $0$, and we are done. From now on, we will assume that $\Pr_{x \sim \mu}[f(x)=0]>0$.

Equation~(\ref{bigcube}) and the two primal constraints imply that for each $x \in \{0,1\}^n$,
\begin{align}
&\sum_{A \ni x,|A| \leq c \log (1/\epsilon)} w_{f(x),A} \geq 1-2\epsilon; \label{one1}\\
&\sum_{A \ni x, |A| \leq c \log (1/\epsilon)} w_{1-f(x),A} \leq 2\epsilon. \label{two2}
\end{align}
Multiplying Equations~(\ref{one1}) and~(\ref{two2}) by $\mu_x$, adding the former over $f^{-1}(1)$ and the later over $f^{-1}(0)$, and re-arranging the order of summations we have,
\begin{align}
&\sum_{A: |A| \leq c \log (1/\epsilon)} \sum_{x \in A, f(x)=1} \mu(x) \cdot w_{1,A} \geq (1-2\epsilon) \cdot \sum_{x \in f^{-1}(1)}\mu(x); \label{one'}\\
&\sum_{A : |A| \leq c \log (1/\epsilon)} \sum_{x \in A, f(x)=0} \mu(x) \cdot w_{1,A} \leq 2\epsilon \cdot \sum_{x \in f^{-1}(0)}\mu(x). \label{two'}
\end{align}
Dividing Equation~(\ref{one'}) by Equation~(\ref{two'}) (note that $\sum_{x \in f^{-1}(0)}\mu_x \neq 0$ by our assumption about $\mu$), we have that,
\begin{align}
\frac{\sum_{A: |A| \leq c \log (1/\epsilon)} w_{1,A}\cdot \left(\sum_{x \in A, f(x)=1} \mu(x)\right)}{\sum_{A:  |A| \leq c \log (1/\epsilon)} w_{1,A}\cdot \left(\sum_{x \in A, f(x)=0} \mu(x)\right)} \geq \frac{1-2\epsilon}{2\epsilon} \cdot \frac{\sum_{x \in f^{-1}(1)}\mu(x)}{\sum_{x \in f^{-1}(0)}\mu(x)} \geq \frac{1-2\epsilon}{2\epsilon}. \nonumber
\end{align}
The last inequality above holds because of our assumption about $\mu$.
This implies that there exists a subcube $A$ with co-dimension $|A| \leq c \log(1/\epsilon)$ such that,
\[\frac{\sum_{x \in A, f(x)=1} \mu(x)}{\sum_{x \in A, f(x)=0} \mu(x)} \geq \frac{1-2\epsilon}{2\epsilon}. \]
Thus,
\[\Pr_{x \sim \mu}[f(x)=1 \mid x \in A] \geq 1-2\epsilon.\]
In other words, $A$ is a $2\epsilon$-error  $1$-certificate under $\mu$. We have,
\[\corr_{\min,2\epsilon}^{\mu}(f) \leq \corr_{2\epsilon}^{1,\mu}(f) \leq |A| \leq \prt_\epsilon (f) \cdot \log(1/\epsilon).\qedhere\]
\end{proof}

\section{Proof of Lemma \ref{fbs-corr}} \label{app2}

\begin{proof}
Let $x$ be such that $\fbs(f, x) = \fbs(f)$, and let $b = f(x)$.
We construct a distribution $\mu$ such that $\corr_{\epsilon}^{b,\mu}(f) \geq \fbs(f)$.

Suppose that $x$ has $k$ sensitive blocks $B_1, \ldots, B_k$.
Let $u_1, \ldots, u_k$ be the corresponding solution to the $\fbs(f, x)$ linear program.
Let $c \in (0, 1-\epsilon)$ be a constant and define $\mu(x) = c$ and $\mu(x^{B_i}) = (1-c)\frac{u_i}{\sum_{i=1}^k u_i} = (1-c)\frac{u_i}{\fbs(f)}$.
Clearly, $\mu$ is a probability distribution on $\{0, 1\}^n$.

Let $A$ be an $\epsilon$-error $b$-certificate according to $\mu$ and recall that $Q_A$ is the set of variables fixed by $A$.
Any input $x^{B_i}$ is inconsistent with $A$ iff $B_i \cap Q_A \neq \varnothing$, thus
$$\sum_{i : B_i \cap Q_A \neq \varnothing} \mu(x^{B_i}) = \Pr_{y \sim \mu}\left[ f(y) \neq b, y \notin A\right].$$
We also have
$$\frac{\Pr_{y \sim \mu}\left[ f(y) = b, y \in A\right]}{\Pr_{y \sim \mu}\left[ f(y) = b, y \in A\right]+\Pr_{y \sim \mu}\left[ f(y) \neq b, y \in A\right]} \geq 1-\epsilon$$
by definition of $A$.
Since $\Pr_{y \sim \mu}\left[ f(y) = b, y \in A\right] = c$, this implies
$$\Pr_{y \sim \mu}\left[ f(y) \neq b, y \in A\right] \leq c \cdot \frac{\epsilon}{1-\epsilon}.$$
Then we get
$$\Pr_{y \sim \mu}\left[ f(y) \neq b, y \notin A \right] = \Pr_{y \sim \mu}\left[ f(y) \neq b \right] - \Pr_{y \sim \mu}\left[ f(y) \neq b, y \in A \right] \geq (1-c) - c \cdot \frac{\epsilon}{1-\epsilon} = 1- c \cdot \frac{1}{1-\epsilon}.$$
On the other hand, since $\sum_{i : j \in B_i} u_i \leq 1$ for each $j \in [n]$, we have
$$\sum_{i : B_i \cap Q_A \neq \varnothing} \mu(x^{B_i}) \leq \sum_{j \in Q_A} \sum_{i : j \in B_i} \mu(x^{B_i}) = \sum_{j \in Q_A} \sum_{i : j \in B_i} (1-c)\frac{u_i}
{\fbs(f)} \leq (1-c) \frac{|A|}{\fbs(f)}.$$
Therefore,
$$\frac{\corr_{\epsilon}(f)}{\fbs(f)} \geq \frac{\corr^{b, \mu}_{\epsilon}(f)}{\fbs(f)} \geq \frac{|A|}{\fbs(f)} \geq \frac{1-\epsilon-c}{(1-\epsilon)(1-c)} = \frac{1-\epsilon-c}{1-\epsilon-c+\epsilon c}.\label{corrgtfb}$$
Since the above relation is true for every $c$, we have,
$$\frac{\corr_{\epsilon}(f)}{\fbs(f)} \geq \lim_{c \rightarrow 0}\frac{1-\epsilon-c}{1-\epsilon-c+\epsilon c}=1.$$
Thus we have $\corr_{\epsilon}(f) \geq \fbs(f)$.
\end{proof}

\end{document}